\newcommand{\m}[1]{\mathcal{#1}}
\newcommand{\sgn}{\mathrm{sgn\,}}
\newcommand{\CR}{\nonumber\\}
\newcommand{\ket}[1]{|#1 \rangle}
\newcommand{\bra}[1]{\langle #1 |}
\newcommand{\braket}[2]{\langle #1 |#2 \rangle}
\theoremstyle{plain}
\newtheorem{lem}{Lemma}[section]
\theoremstyle{definition}
\newtheorem{defn}{Definition}[section]
\theoremstyle{remark}
\begin{document}

\begin{titlepage}
\title{Dynamical Quantum Multigraphs}
\author{Kassahun Betre}
\email{kassahun.betre@sjsu.edu}
\affiliation{San Jose State University, 1 Washington Square, San Jose, CA 95192, USA}
\author{Nathan Lewis}
\email{nathan.lewis@wvm.edu}
\affiliation{Mission College, 3000 Mission College Blvd, Santa Clara, CA 95054, USA}

\begin{abstract}
Motivated by applications in background-independent quantum gravity, we discuss the quantization of labeled and unlabeled finite multigraphs with a maximum edge count. We provide a unified way to represent quantum multigraphs with labeled or unlabeled vertices, which enables the study of quantum multigraphs as dynamical microscopic degrees of freedom and not just as representations of relations among quantum states of particles. The quantum multigraphs represent a quantum mechanical treatment of the relations themselves and give rise to Hilbert space realizations of relations. After defining the Hilbert space, we focus on quantum simple graphs and explore the thermodynamics resulting from two simple models, a free Hamiltonian and an Ising-type Hamiltonian (with interactions among nearest-neighbor edges). We show that removing the distinction among vertices by considering unlabeled vertices gives rise to a qualitatively different thermodynamics. We find that the free theory of labeled quantum simple graphs is the Erd\H{o}s--R\'enyi--Gilbert $G(N,p)$ model of random graphs. This model has analytic free energy and hence no thermodynamic phase transition. On the other hand, the unlabeled quantum graphs give rise to proper thermodynamic phase transitions in both the free and the ferromagnetic Ising models, characterized by divergence in the specific heat and critical slowing near the critical temperature. The thermodynamic phase transition has an order parameter given as the fraction of vertices in the largest connected component. Although this is similar to the phase transition in the $G(N,p)$ model, in this case it represents the actual thermodynamic phase transition. 
\end{abstract}
\maketitle
\vspace{-0.3in}
\begin{widetext}
\tableofcontents
\end{widetext}
\setcounter{page}{1}

\end{titlepage}

\section{Introduction}
Graphs are combinatorial objects with a set of vertices and edges that connect the vertices. Quantum versions of graphs, \emph{quantum graphs} or \emph{quantum networks}, have appeared in several distinct communities with different meanings. One type of quantum graphs was proposed and studied as one-dimensional approximations for waves propagating in thin structures \cite{kuchment2008quantum}. Such quantum graphs are made up of the Hilbert space of square integrable functions that live on the edge space of a weighted graph with boundary conditions on the vertices. In quantum information, quantum graphs represent entanglement states of multi-qubit systems. They arise as stabilizer states associated with a simple graph and underlie measurement‑based quantum computation and multipartite entanglement \cite{Hein2004GraphStates,Hein2006GraphStatesReview,plesch2003entangled}. A closely related generalization are \emph{quantum hypergraph states}, where multi‑qubit controlled‑phase gates encode hyperedges; they admit a generalized stabilizer description and a one‑to‑one correspondence with real equally weighted states used in algorithms \cite{Rossi2013NJP,Qu2013PRA,zhang2024quantum}. Quantum graphs also appear as ``noncommutative graphs" which generalize the confusability graphs of classical zero-error communication channels \cite{duan2012zero, musto2018compositional}. 

In quantum gravity, quantum graphs appear in at least two contexts. In loop quantum gravity, \emph{spin networks} and \emph{spin foams} furnish combinatorial state sums on graphs and 2‑complexes \cite{RovelliSmolin1995PRD,Baez1999SpinFoams}. In background-independent quantum gravity, specifically in Quantum Graphity, quantum graphs are defined to model the microscopic quantum states of a pre-geometric quantum gravity state without any reference to a background or auxiliary spacetime \cite{konopka2006quantum, konopka2008quantum}. In this context, the quantum graphs are not merely representations of entanglement states of particles, or operator spaces, but rather the underlying fundamental degrees of freedom on which a dynamics can be built. The graphs are dynamical as opposed to fixed. The Hilbert space of the system with $N$ vertices is the tensor product of $N(N-1)/2$ edge Hilbert spaces defined on each unordered pair of distinguishable (or labeled) vertices. The aim of Quantum Graphity is to build a model for locality, topology, and geometry of spacetime as emergent from dynamical quantum graphs. In contrast, in the PDEs and quantum information context, the quantum graphs, multigraphs, and hypergraphs do not represent relations that are themselves ``quantum" in the sense of existing in a Hilbert space. In addition, the underlying graphical or network structure is fixed and not dynamical.  

In this paper we take the notion of quantum graphs as defined in Quantum Graphity. We extend dynamical quantum graphs of Quantum Graphity to dynamical quantum multigraphs, and also include the treatment of indistinguishable (unlabeled) vertices. We then study the effect of removing the vertex labels on the thermodynamics of simple dynamical toy models inspired by the Erd\H{o}s--R\'enyi--Gilbert random graph model and the Ising model. Throughout this work, we treat the \emph{graph itself as a classical variable}, while each edge carries a finite‑dimensional local Hilbert space; our dynamics and thermodynamics are over these edge degrees of freedom and (for unlabeled ensembles) over isomorphism classes of graphs.

Broadly speaking ``quantization" of a classical system refers to a procedure of taking the states and observables of the classical system and mapping them to vectors in a Hilbert space and the self-adjoint operators acting on the Hilbert space. In that case, quantizing multigraphs, abstract simplicial complexes, or other combinatorial objects can be viewed similarly as finding Hilbert space realizations of these combinatorial objects. The main motivation to treating multigraphs quantum mechanically as the fundamental quantum degrees of freedom comes from possible applications in background-independent formulations of quantum gravity in which spacetime and its dynamics are expected to emerge as macroscopic average phenomena of combinatorially defined, non-geometric quantum microstates. Combinatorial objects such as graphs, abstract simplicial complexes, and matroids have been used for such purposes \cite{konopka2008quantum, lee2009emergence, hartle2022simplicial, brunnemann2010oriented, nieto2011oriented}. 

At a more foundational level, if one presupposes that everything in nature has a fundamentally quantum mechanical description, the list of entities with a quantum mechanical description should include not only ``stuff" such as matter and fields, but also relationships among constituents of ``stuff". Consider a system made up of $N$ constituents, for example $N$ particles. Ordinarily, the state of such a system is a vector in the $N$-fold tensor product Hilbert space of the one-particle Hilbert spaces of each particle. But now, there will also be a Hilbert space associated with every pair of particles that corresponds to the 2-way or 2-component relationships of pairs of particles. Going further, every 3-,4-,$\dots$, $k$-subset $\sigma_k$ of the $N$ constituents will have a corresponding Hilbert space describing the state of the 3- or 4- ... $k$-fold relationships. We emphasize here that the question of whether or not the higher-degree relations are derivable from the lower ones is a separate consideration. Let $\Delta$ be the clique complex over the complete graph $K_N$ with base set $[N]=\{1,2,\dots,N\}$ in which every non-empty subset of $[N]$ is a simplex, then, for each $k\in \{2,\dots,N\}$ the $k$-fold relational state is a vector in the Hilbert space $\mathcal{H}_{k}$. Therefore, the universal Hilbert space that describes the whole system with $N$ constituents together with all possible states of relationships will be 
\begin{align}
\mathcal{H}_{\mathcal{U}}=\bigotimes_{k=1}^N\mathcal{H}_{k},
\end{align}
where $\mathcal{H}_{1}$ is the familiar Hilbert space of the $N$-fold tensor product of the particles in many-body quantum mechanics. 

This signals an exponential expansion of the basic Hilbert space of a $N-$particle system. The dynamics governing the system will also expand exponentially. In addition to the traditional Hamiltonian operator governing the time evolution of the 1-particle space, there will be a Hamiltonian governing each of the 2-, 3-, \dots, $N-$way relational states as well as possible interaction Hamiltonians that cross between them. The same goes more generally to the algebra of operators acting on this Hilbert space. The aim of background-independent quantum gravity can then be embedded in this framework as seeking to produce semi-classical notions of spacetime geometry dynamically from quantized relations.

This paper focuses only on the Hilbert space of 2-way relations leading to quantum multigraphs. The outline of the paper is as follows. In Section \ref{sec:quantumGraphs}, after listing the classical definitions of multigraphs, we construct the edge Hilbert space of quantum multigraphs associated with 2-way relationships over labeled vertices. Section \ref{sec:unlabeledQuantumGraphs} defines the Hilbert space of unlabeled quantum multigraphs. In Section \ref{sec:dynamics}  we consider two simple dynamical toy models over the labeled and unlabeled quantum graphs. We show that removing vertex labels gives rise to a qualitatively different thermodynamics marked by thermodynamic phase transitions that are absent in the labeled case. Our result is consistent with a recent result by Evnin and Krioukov showing a first-order phase transition in unlabeled random networks with a fixed average number of links that does not exist in the labeled version \cite{EvninKrioukov2025PRL}. In Section \ref{sec:conclusion} we hypothesize the reason for this qualitative difference and for which types of systems the labeled and unlabeled thermodynamics converge. 

\section{Labeled Quantum Multigraphs}
\label{sec:quantumGraphs}
The goal of this section is to generally describe quantum versions of finite multigraphs with $N$ vertices where the maximum number of edges between any two vertices is fixed at $D$. The resulting space is a Hilbert space whose basis vectors can be put into a one-to-one correspondence with the set of multigraphs on $N$ nodes and maximum edge count of $D$ between any two vertices. The action of the symmetric group $S_N$ on classical multigraphs will translate analogously to a unitary action in the space of quantum multigraphs and will be utilized to define unlabeled multigraphs from orbit sums. We start the section with a short summary of classical multigraphs. The notation and terminology used in Section \ref{sec:definitions} follows the book ``Modern Graph Theory" \cite{bollobas1998modern}.

\subsection{Classical Multigraphs}
\label{sec:definitions}

A directed graph or {\em digraph} $G=(V,E)$ is a set of {\em vertices} $V$ together with a set $E\subseteq V\times V = \{(a,b)|a, b \in V\}$ of ordered pairs of vertices called {\em directed edges}. Edges of the form $(a,a)$ that start and end at the same vertex are called self-loops. A generalization of digraphs is a {\em directed multigraph} where multiple edges are allowed to start and end between any two vertices and multiple self-loops are also allowed. Formally, a {\em directed multigraph} $M=(V, E)$ is defined as a set of vertices $V$ together with a multiset $E$ (a ``set that allows elements to be repeated") of directed edges (ordered pairs of vertices). For {\em undirected multigraphs} the edge space is a multiset of unordered pairs of vertices. Self-loops and multiple edges are allowed in directed multigraphs. In this paper, we use the term ``multigraph" to refer to \emph{undirected multigraphs with no self-loops}. {\em Simple graphs} are undirected graphs that do not have self-loops or multiple edges. Directed and undirected multigraphs, digraphs, and simple graphs on $N$ vertices can be uniquely represented using a $N\times N$ matrix $A$ -- the {\em adjacency matrix} defined as an integer matrix whose $ij$th element is the number of edges going from the vertex $i$ to the vertex $j$.  The adjacency matrices of simple graphs and multigraphs are symmetric. Furthermore, for digraphs and simple graphs, the adjacency matrix elements are 1 or 0. Fig.(\ref{fig:classicalgraphs}) shows a directed multigraph, a digraph, and a simple graph together with their adjacency matrices. 

\begin{figure*}
\centering
\includegraphics[width=\textwidth]{./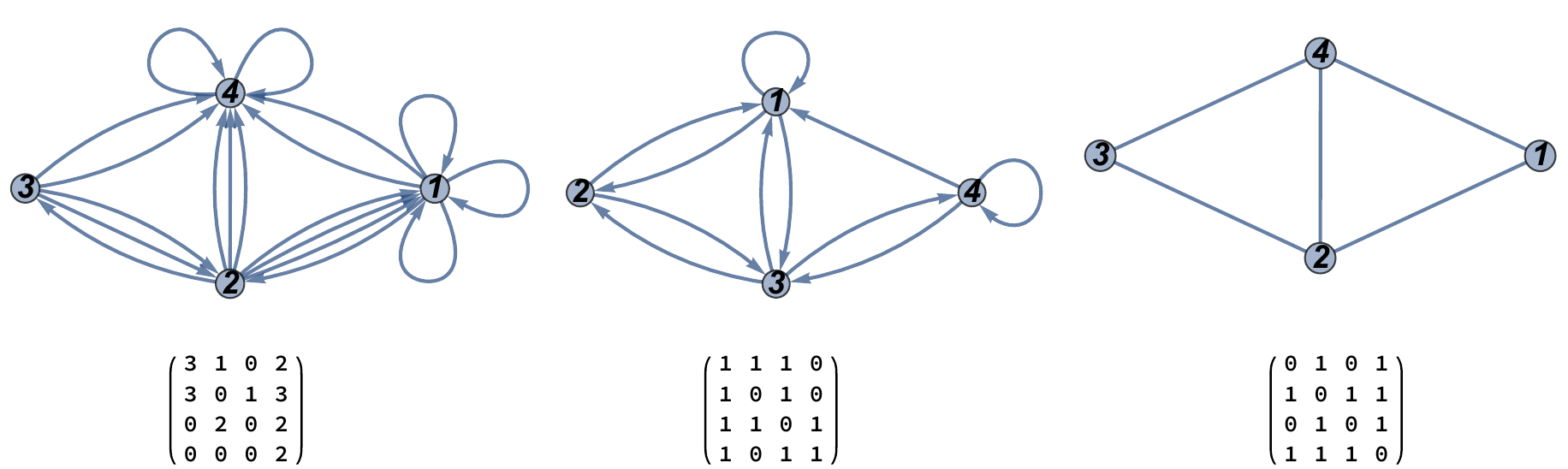}
\caption{A directed multigraph, digraph, and simple graph together with their adjacency matrices.}
\label{fig:classicalgraphs}
\end{figure*}

If we treat the vertices of a graph as distinguishable (or {\em labeled}), the number of all labeled digraphs with $N$ edges is $2^{N^2}$. There are $N^2$ adjacency matrix elements $A_{ij}$ each having the possibility of being 1 or 0. For simple graphs the count is $2^{N(N-1)/2}$. The number of directed multigraphs on $N$ nodes and a maximum edge count $D$ between any two vertices is $D^{N^2}$. The number of undirected multigraphs with no self-loops is $D^{N(N-1)/2}$.

Two multigraphs $G_1(V_1,E_1), G_2(V_2,E_2)$ with $N$ vertices are said to be {\em isomorphic} if there is a set map $f:V_1\rightarrow V_2$ of the vertex sets such that for any edge $\{i,j\}\in E_1$ the image $\{f(i),f(j)\}\in E_2$ is an edge of $G_2$. 
An {\em automorphism} of a multigraph $G(V,E)$ is an isomorphism from the multigraph back to itself, i.e., a permutation $P$ of the vertex set $V$ is an automorphism of $G$ if for all edges $\{i,j\}\in E$, $\{P(i),P(j)\}\in E$. The set of automorphisms of a multigraph forms a group under composition: the {\em automorphism group} of $G$, denoted by $\Gamma(G)$. 
\begin{align}
\label{eq:automorphismGroup}
	\Gamma(G) = \{P \in S_N| P(G)=G\}.
\end{align}
Let $G(V,E)$ be an unlabeled multigraph. The number of ways of labeling $G$ to produce all unique labeled multigraphs is $|V(G)|!/|\Gamma(G)|$, where $|\Gamma(G)|$ is the order of the automorphism group of $G$. So, for example, the number of all labeled graphs on the $N$ vertices is
\begin{align}
		2^{N(N-1)/2} = N!\sum_{G_u}\frac{1}{|\Gamma(G_u)|},
\end{align}
where $\{G_u\}$ is the set of unlabeled graphs with $N$ vertices \cite{harary2014graphical}.

\subsection{Labeled Quantum Directed Multigraphs}
We now define quantum versions of directed multigraphs that live in a Hilbert space. The construction is an extension of quantum simple graphs defined in \cite{konopka2006quantum,konopka2008quantum}. Consider $N$ labeled vertices $V=[N]\equiv\{1,2,\dots,N\}$ and introduce a ``single-particle" edge Hilbert space $\m{H}_{ij}$ associated with each ordered pair $(i,j)\in V\times V$ of vertices. We are concerned only with the case where all edge Hilbert spaces $\m{H}_{ij}$ are copies of the same single-particle Hilbert space $\m{H}$. Let the single particle Hilbert space have complex dimensions $\dim \m{H} = D$. The total Hilbert space of the quantum directed multigraph (with self-loops), $\m{H}_{tot}$ is the tensor product over each edge Hilbert space. 
\begin{align}
    \m{H}_{tot} = \bigotimes_{i,j=1}^N\m{H}_{ij}
\end{align}
If the one-particle Hilbert space is spanned by a set of basis vectors $\m{H}=\mathrm{Span}\{\ket{n}\},n\in\{0,1,\dots,D-1\}$, then the total Hilbert space is spanned by basis vectors 
\begin{align}
    \m{H}_{tot}=\mathrm{Span}\left\{\prod_{i,j=1}^N\ket{n_{ij}}_{ij}\right\},
\end{align}
where each $n_{ij}$ independently takes values in $\{0,1,\dots,D-1\}$. The dimension of the total Hilbert space is $\dim \m{H}_{tot} = D^{N^2}$. Each of the $D^{N^2}$ tensor product basis vectors can be put in one-to-one correspondence with directed multigraphs with self-loops and maximum edge count $D$. We declare this Hilbert space to be the quantum version of a directed multigraph with self-loops on $N$ vertices. Fixing $D=2$ gives rise to quantum digraphs with self-loops.  
Without loss of generality, we will always write the tensor product basis states in lexicographic order. Examples of a quantum digraph and a quantum directed multigraph with $N=3$ are shown in Fig.(\ref{fig:quantumGraphs}) for the case where the single-particle edge states are $\m{H}_{ij}=\mathrm{Span}\{\ket{0},\ket{1}\}$ and $\m{H}_{ij}=\mathrm{Span}\{\ket{0},\ket{1},\ket{2}\}$.  

\begin{figure*}
\centering

 \includegraphics[width=0.7\textwidth]{./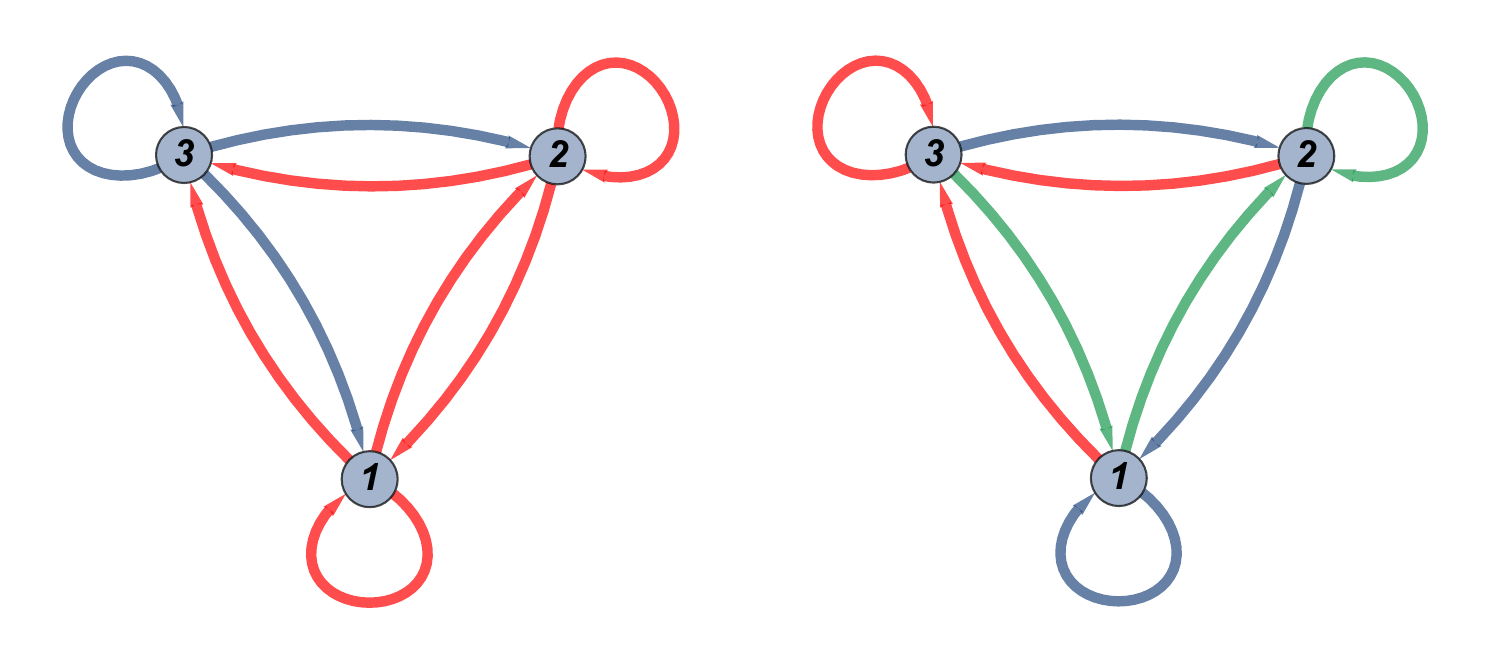}
 
\caption{(Left) A diagrammatic representation of a quantum digraph with $N=3$ and $\m{H}_{ij}=\mathrm{Span}\{\ket{0},\ket{1}\}$ given by $\ket{G}=\ket{0}_{11}\ket{0}_{12}\ket{0}_{13}\ket{0}_{21}\ket{0}_{22}\ket{0}_{23}\ket{1}_{31}\ket{1}_{32}\ket{1}_{33}$. Red edges represent $\ket{0}$ and blue $\ket{1}$. (Right) A diagrammatic representation of a quantum directed multigraph for $N=3$ and $\m{H}_{ij}=\mathrm{Span}\{\ket{0},\ket{1},\ket{2}\}$ given by the state $\ket{G}=\ket{1}_{11}\ket{2}_{12}\ket{0}_{13}\ket{1}_{21}\ket{2}_{22}\ket{0}_{23}\ket{2}_{31}\ket{1}_{32}\ket{0}_{33}$. Red edges represent $\ket{0}$, blue $\ket{1}$, and green $\ket{2}$.}
\label{fig:quantumGraphs}
\end{figure*}

There is a natural action of the permutation group $S_N$ on this Hilbert space defined as permutation of the edge labels followed by rearranging the tensor products back to lexicographic order. Let $\pi\in S_N$ be a permutation; then
\begin{widetext}
\begin{align}
     \pi\left(\ket{n_{11}}_{11}\dots\ket{n_{NN}}_{NN}\right) &= \ket{n_{11}}_{\pi(1)\pi(1)}\dots\ket{n_{11}}_{\pi(N)\pi(N)}\CR
     &=\ket{n_{\pi^{-1}(1)\pi^{-1}(1)}}_{11}\ket{n_{\pi^{-1}(1)\pi^{-1}(2)}}_{12}\dots\ket{n_{\pi^{-1}(N)\pi^{-1}(N)}}_{NN}
\end{align}
\end{widetext}

For example, the permutations $\pi_1=(12)$, and $\pi_2=(23)$ acting on the state \[\ket{G}=\ket{0}_{11}\ket{0}_{12}\ket{0}_{13}\ket{0}_{21}\ket{0}_{22}\ket{0}_{23}\ket{1}_{31}\ket{1}_{32}\ket{1}_{33}\] in Fig.(\ref{fig:quantumGraphs}, Left) give,
\begin{widetext}
\begin{align}
    (12)\ket{G}&=\ket{0}_{22}\ket{0}_{21}\ket{0}_{23}\ket{0}_{12}\ket{0}_{11}\ket{0}_{13}\ket{1}_{32}\ket{1}_{31}\ket{1}_{33}\CR
    &=\ket{0}_{11}\ket{0}_{12}\ket{0}_{13}\ket{0}_{21}\ket{0}_{22}\ket{0}_{23}\ket{1}_{31}\ket{1}_{32}\ket{1}_{33}\CR
     &=\ket{G}\CR    (23)\ket{G}&=\ket{0}_{11}\ket{0}_{13}\ket{0}_{12}\ket{0}_{31}\ket{0}_{33}\ket{0}_{32}\ket{1}_{21}\ket{1}_{23}\ket{1}_{22}\CR    &=\ket{0}_{11}\ket{0}_{12}\ket{0}_{13}\ket{1}_{21}\ket{1}_{22}\ket{1}_{23}\ket{0}_{31}\ket{0}_{32}\ket{0}_{33}\CR
     &=\ket{G'}\neq\ket{G}
\end{align}
\end{widetext}
The action of the first permutation $\pi_1=(12)$ on $\ket{G}$ resulted in $\ket{G}$ itself, but $\pi_2=(23)$ produced a different state $\ket{G'}$. So $\pi_1$ is a symmetry of state $\ket{G}$. 
We take the automorphism group of a quantum directed multigraph to be the set of all permutations $\pi\in S_N$ such that $\pi\ket{G}=\ket{G}$. The set of automorphisms $\pi\in S_N$ of a quantum multigraph $\ket{G}$ forms a group under composition, the automorphism group $\Gamma(\ket{G})=\{\pi\in S_N : \pi\ket{G}=\ket{G}\}$.

This action of $S_N$ on $\m{H}_{tot}$ gives rise to a $D^{N^2}\times D^{N^2}$ dimensional unitary (permutation) representation of the symmetric group $S_N$, and the automorphism group $\Gamma(\ket{G})$ is similarly $D^{N^2}\times D^{N^2}$ dimensional representation of the automorphism group of the classical directed multigraph.

\subsection{Quantum Undirected Multigraphs with No Self-loops}

Now we consider the case of undirected multigraphs with no self-loops which, as stated above, we will simply refer to as multigraphs. To define their Hilbert space, we will need to remove states with self-loops and restrict down to only one Hilbert space $\mathcal{H}_{ij}$ per unordered pair $\{i,j\}$ of vertices. We again start with $N$ labeled vertices $V=[N]\equiv\{1,2,\dots,N\}$ and introduce a single-particle Hilbert space $\m{H}_{ij}$ associated with each unordered pair $\{i,j\}\subseteq V$ of vertices. As before, we are concerned only with the case where all edge Hilbert spaces $\m{H}_{ij}$ are copies of the same single-particle Hilbert space $\m{H}$. The total quantum multigraph Hilbert space is the tensor product,
\begin{align}
    \m{H}_{MG}=\bigotimes_{j>i=1}^N\m{H}_{ij}
\end{align}
If the single-particle Hilbert space has dimension $D$, $\m{H}_{MG}$ is $D^{N(N-1)/2}$ dimensional.  Given a basis $\{\ket{n}\}$ for the single-particle Hilbert space $\m{H}$, the total Hilbert space is spanned by the tensor product kets, 
\begin{align}
    \m{H}_{MG}=\mathrm{Span}\{\prod_{j>i=1}^N\ket{n_{ij}}_{ij}\}
\end{align}
As before, we order the kets in lexicographic order of the labels. Each of the basis states can be put in a one-to-one correspondence with the set of all labeled multigraphs on $N$ vertices. Therefore, the Hilbert space $\m{H}_{MG}$ provides the quantization of finite labeled undirected multigraphs with no self-loops and a maximum edge of $D$.
\begin{figure*}
\centering
 \includegraphics[width=0.7\textwidth]{./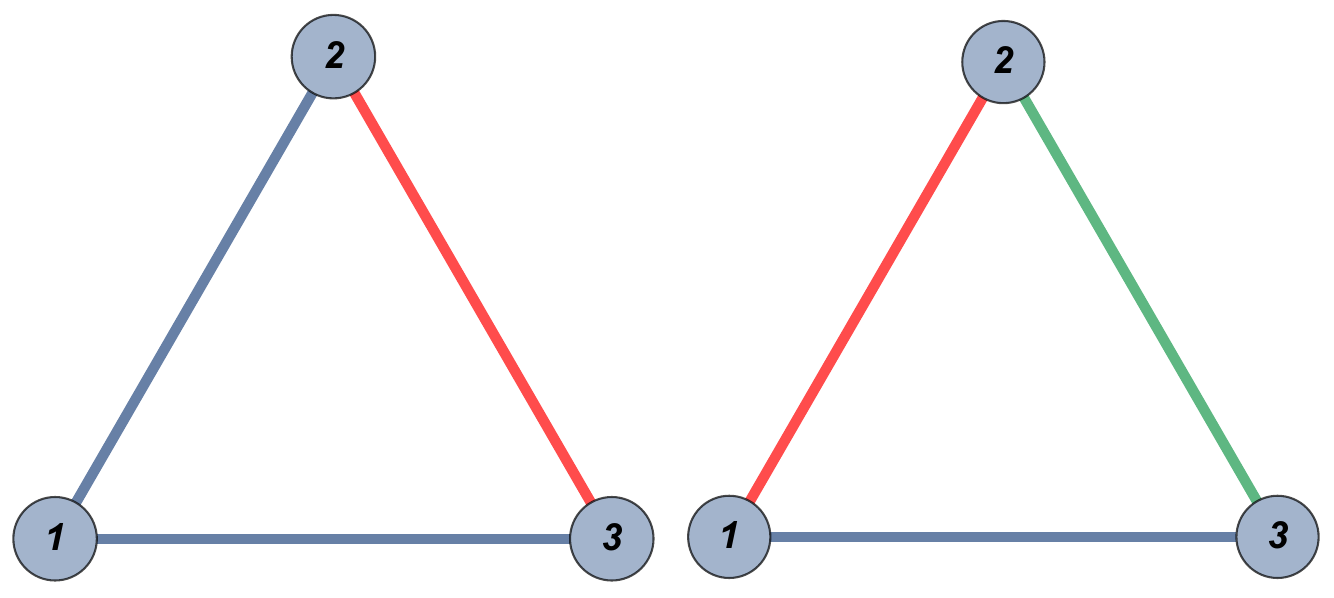}
\caption{(Left) A diagrammatic representation of a quantum multigraph with $N=3$ and $\m{H}_{ij}=\mathrm{Span}\{\ket{0},\ket{1}\}$ given by $\ket{G}=\ket{1}_{12}\ket{1}_{13}\ket{0}_{23}$. Red edges represent $\ket{0}$ and blue $\ket{1}$. (Right) A diagrammatic representation of a quantum multigraph for $N=3$ and $\m{H}_{ij}=\mathrm{Span}\{\ket{0},\ket{1},\ket{2}\}$ given by the state $\ket{G}=\ket{0}_{12}\ket{1}_{13}\ket{2}_{23}$. Red edges represent $\ket{0}$, blue $\ket{1}$, and green $\ket{2}$.}
\label{fig:quantumUndirectedGraphs}
\end{figure*}
The symmetric group $S_N$ acts on $\m{H}_{MG}$ the same way it did on the quantum directed multigraph Hilbert space $\m{H}_{tot}$ with one modification. Since the undirected quantum graph single-particle edge kets $\ket{n}_{ij}$ are always indexed with $j>i$, we have to determine what to do when we encounter kets such as $\ket{n}_{ij}$ with $i>j$ at the end of acting with a permutation. There are two natural choices that produce an action of $S_N$, $\ket{n}_{ij}=\pm\ket{n}_{ji}$, so that, for example, $\ket{0}_{21}=\pm\ket{0}_{12}$. We distinguish between the two cases by putting square brackets or parentheses around the subscripts:  $\ket{n_{ij}}_{(ij)}=\ket{n_{ij}}_{(ji)}$, and $\ket{n_{ij}}_{[ij]}=-\ket{n_{ij}}_{[ji]}$. We will refer to these kets as even states ($\ket{n_{ij}}_{(ij)}$) and odd states($\ket{n_{ij}}_{[ij]}$). The total undirected quantum graph Hilbert space corresponding to each choice of sign will be referred to as symmetric $\m{H}^S_{MG}=\mathrm{Span}\{\prod_{j>i=1}^N\ket{n_{ij}}_{(ij)}\}$ or antisymmetric $\m{H}^A_{MG}=\mathrm{Span}\{\prod_{j>i=1}^N\ket{n_{ij}}_{[ij]}\}$. As an example, consider a state $\ket{G}^A=\ket{1}_{[12]}\ket{1}_{[13]}\ket{0}_{[23]}$ in the antisymmetric Hilbert space $\m{H}^A_{MG}$ represented diagrammatically on the right hand side of Fig.(\ref{fig:quantumUndirectedGraphs}), and two permutations $\pi_1=(12), \pi_2=(23)$ in $S_3$, then
\begin{align}
    (12)\ket{G}^A&=\ket{1}_{[21]}\ket{1}_{[23]}\ket{0}_{[13]}\CR
    &=-\ket{1}_{[12]}\ket{0}_{[13]}\ket{1}_{[23]}\CR     (23)\ket{G}^A&=\ket{1}_{[13]}\ket{1}_{[12]}\ket{0}_{[32]}\CR  &=-\ket{1}_{[12]}\ket{1}_{[13]}\ket{0}_{[23]}=-\ket{G}
\end{align}
The action of $(23)$ resulted in the same ket up to a phase, and so is a symmetry of the state. We define automorphisms and the automorphism group of the antisymmetric quantum multigraphs as $\Gamma(\ket{G}^A)=\{\pi\in S_N: \pi\ket{G}^A=\mathrm{sgn}(\pi)\ket{G}^A\}$. In the case of the symmetric states, $\Gamma(\ket{G}^S)=\{\pi\in S_N: \pi\ket{G}^S=\ket{G}^S\}$.

\subsection{Ladder Operators}
Let the one-particle $D$-dimensional Hilbert space of the edge states be spanned by orthonormal basis vectors $\m{H}=\mathrm{Span}\{\ket{n}\},n\in\{0,1,\dots,D-1\}$ where an arbitrary ordering of the basis kets is implied, analogous to the treatment of angular momentum eigenstates in quantum mechanics we define a ``raising" ladder operator $L^+$ and a ``lowering" ladder operator $L^-$ as the operators with the action, 
\begin{align}
L^+\ket{n}
=
\begin{cases}\ket{n+1},& n<D-1\\
0,&n=D-1
\end{cases},\quad L^-\ket{n}=
\begin{cases}
\ket{n-1},&n>0\\
0,&n=1
\end{cases}
\end{align}

\noindent In the basis $\{\ket{0},\ket{1},\dots,\ket{D-1}\}$, these operators have matrix representations. 
\begin{align}
    \left(L^+\right)_{nm}&=\begin{cases}
    1,&\text{ if }n=m-1\\
    0&\text{otherwise}
\end{cases},\qquad
\left(L^-\right)_{nm}=\begin{cases}
    1,&\text{ if }n=m+1\\
    0&\text{otherwise}
\end{cases}
\end{align}

\noindent They have the following commutation and anticommutation relations:
\begin{align}
    \{L^+,L^-\}_{nm}&=2\delta_{nm}-\delta_{n0}-\delta_{m,D-1}\\
    [L^+,L^-]_{nm}&=\delta_{n0}-\delta_{m,D-1}
\end{align}

\noindent We promote these operators to act on the quantum directed multigraph Hilbert space $\m{H}_{tot}$. There are $N^2$ pairs of raising and lowering operators $L^+_{ij}, L^-_{ij}$ associated with each ordered pair $(i,j)$ of vertices given by
\begin{align}
    L^+_{ij}\equiv I_{11}\otimes\dots\otimes L^+_{ij}\otimes\dots\otimes I_{NN}
\end{align}
By abuse of notation, we use $L^+_{ij}$ to refer to both the operator in the quantum directed multigraph Hilbert space $\m{H}_{tot}$ and the one-particle Hilbert space of the $ij$th edge.
The raising and lowering operators commute if their indices are not identical, so, in general, these operators obey the following sets of commutation and anticommutation relations. 
\begin{align}
    [L^+_{ij},L^+_{kl}]&=[L^-_{ij},L^-_{kl}]=0\CR
    {[L^+_{ij},L^-_{kl}]}_{mn}&=\delta_{ik}\delta_{jl}\left(\delta_{n0}-\delta_{m,D-1}\right)\CR
     \{L^+_{ij},L^-_{kl}\}_{mn}&=\delta_{ik}\delta_{jl}\left(2\delta_{nm}-\delta_{n0}-\delta_{m,D-1}\right)
\end{align}

\noindent The raising and lowering operators act on the quantum directed multigraph basis kets as follows:
\begin{align}
L^+_{ij}&\Big(\ket{n_{11}}_{11}\dots\ket{n_{ij}}_{ij}\dots\ket{n_{NN}}_{NN}\Big)\CR
&=\begin{cases}
    0&\text{ if }n_{ij}=D-1\\
    \ket{n_{11}}_{11}\dots\ket{n_{ij}+1}_{ij}\dots\ket{n_{NN}}_{NN}&\text{ otherwise }
\end{cases}
\CR
L^-_{ij}&\Big(\ket{n_{11}}_{11}\dots\ket{n_{ij}}_{ij}\dots\ket{n_{N}}_{NN}\Big)
\CR
&=\begin{cases}
    0&\text{ if }n_{ij}=0\\
    \ket{n_{11}}_{11}\dots\ket{n_{ij}-1}_{ij}\dots\ket{n_{N}}_{NN}&\text{ otherwise }
\end{cases}
\end{align}
The construction is carried out analogously to the symmetric and antisymmetric quantum multigraph (i.e. undirected with no self-loops) Hilbert spaces. There are $N(N-1)/2$ pairs of raising and lowering operators ${L^S_{ij}}^+, {L^S_{ij}}^-$, and ${L^A_{ij}}^+, {L^A_{ij}}^-$ so that
\begin{align}
{L^S_{ij}}^+&\Big(\ket{n_{12}}_{(12)}\dots\ket{n_{ij}}_{(ij)}\dots\ket{n_{N-1,N}}_{(N-1,N)}\Big)\CR
&=\begin{cases}
    0&\text{ if }n_{ij}=D-1\\
    \ket{n_{12}}_{(12)}\dots\ket{n_{ij}+1}_{(ij)}\dots\ket{n_{N-1,N}}_{(N-1,N)}&\text{ otherwise }
\end{cases}\CR
{L^S}^-_{ij}&\Big(\ket{n_{12}}_{(12)}\dots\ket{n_{ij}}_{(ij)}\dots\ket{n_{N-1,N}}_{(N-1,N)}\Big)\CR
&=\begin{cases}
    0&\text{ if }n_{ij}=0\\
    \ket{n_{12}}_{(12)}\dots\ket{n_{ij}-1}_{(ij)}\dots\ket{n_{N-1,N}}_{(N-1,N)}&\text{ otherwise }
\end{cases}
\end{align}
Similar relations hold for the antisymmetric raising and lowering operators ${L^+}^A_{ij}, {L^-}^A_{ij}$ and the antisymmetric basis kets $\{\ket{n_{ij}}_{[ij]}, i<j\in[N]\}$. These also commute for non-identical indices.  

Define the ``ground states" $\ket{0}, \ket{0}^A,\ket{0}^S$ of $\m{H}_{tot}, \m{H}^A_{MG}$ and $\m{H}^S_{MG}$ respectively to be
\begin{align}
    \ket{0}&=\ket{0}_{11}\ket{0}_{12}\dots\ket{0}_{NN}\CR
    \ket{0}^S&=\ket{0}_{(12)}\ket{0}_{(13)}\dots\ket{0}_{(N-1,N)}\CR
    \ket{0}^A&=\ket{0}_{[12]}\ket{0}_{[13]}\dots\ket{0}_{[N-1,N]}
\end{align}
Note that under the action of the symmetric group, the directed ground state $\ket{0}$ and the symmetric undirected ground state $\ket{0}^S$  are fully symmetric, and the antisymmetric undirected ground state $\ket{0}^A$ is fully antisymmetric since exactly the same number of kets as the number of transpositions of the permutation will pick up minus signs:
\begin{align}
    \pi\ket{0}&=\ket{0},\;\;\pi\ket{0}^S=\ket{0}^S,\;\; \pi\ket{0}^A=\sgn(\pi)\ket{0}^A
\end{align}
We can use the raising and lowering operators to express any basis ket as the action of the raising operators acting on the ground state.

\begin{align}
    \ket{G}&=\ket{n_{11}}_{11}\dots\ket{n_{NN}}_{NN}=\left(L^+_{11}\right)^{n_{11}}\dots\left(L^+_{NN}\right)^{n_{NN}}\ket{0}\CR
    \ket{G}^S&=\ket{n_{11}}_{(11)}\dots\ket{n_{NN}}_{(NN)}=\left({L^S}^+_{11}\right)^{n_{11}}\dots\left({L^S_{NN}}^+\right)^{n_{NN}}\ket{0}^S\CR
    \ket{G}^A&=\ket{n_{12}}_{[12]}\dots\ket{n_{N-1,N}}_{[N-1,N]}=\left({L^A}^+_{12}\right)^{n_{12}}\dots\left({L^A}^+_{N-1,N}\right)^{n_{N-1,N}}\ket{0}^A
\end{align}

The action of the permutation group $S_N$ on the quantum multigraph basis kets implies the following permutation properties of the raising and lowering operators. Let $\pi\in S_N$ be a permutation, and we will use the same symbol for its unitary representation that acts on the Hilbert spaces. First, for the operators of the directed multigraph Hilbert space,
\begin{align}
    \pi \left(L^+_{ij}\ket{0}\right)&=\left(\pi L^+_{ij}\pi^{-1}\right)\pi\ket{0}\CR
    &=\left(\pi L^+_{ij}\pi^{-1}\right)\ket{0}\CR
    &=L^+_{\pi(i)\pi(j)}\ket{0}
\end{align}
Similarly,
\begin{align}
    \pi \left({L^S_{ij}}^+\ket{0}^S\right)&=\left(\pi {L^S_{ij}}^+\pi^{-1}\right)\pi\ket{0}^S\CR
    &=\left(\pi {L^S_{ij}}^+\pi^{-1}\right)\ket{0}^S,\quad\text{but}\CR
    \pi \left({L^S_{ij}}^+\ket{0}^S\right)&= \pi\ket{0}_{(12)}\dots\ket{1}_{(ij)}\dots\ket{0}_{(N-1,N)} \CR
    &=\ket{0}_{(\pi(1)\pi(1))}\dots\ket{1}_{(\pi(i)\pi(j))}\dots\ket{0}_{(NN)}\CR
    &={L^S}^+_{\big(\pi(i)\pi(j)\big)}\ket{0}^S,
\end{align}
where 
\begin{align}
    {L^S}^+_{(kl)}=\begin{cases}
        {L^S}^+_{kl}&k\le l\\
        {L^S}^+_{lk}&k>l
    \end{cases}
\end{align}
Therefore, 
\begin{align}
    \pi {L^S}^+_{ij}\pi^{-1}={L^S}^+_{\big(\pi(i)\pi(j)\big)}.
\end{align}
In particular, let $i<j$ and take $\pi = (ij)$, we find $ \pi {L^+}^S_{ij}\pi^{-1}={L^+}^S_{ji} = {L^+}^S_{ij}$. So, the raising (and lowering) operators of $\m{H}_{tot}$ and $\m{H}^S_{MG}$ are symmetric under the permutation of their labels. 

For the antisymmetric undirected Hilbert space,
\begin{align}
    \pi \left({L^A}^+_{ij}\ket{0}^A\right)&=\pi\ket{0}_{[12]}\dots\ket{1}_{[ij]}\dots\ket{0}_{[N,N-1]}\CR
    &=\ket{0}_{[\pi(1)\pi(2)]}\dots\ket{1}_{[\pi(i)\pi(j)]}\dots\ket{0}_{[\pi(N-1)\pi(N)]}\CR
    &=\sgn(\pi)\ket{0}_{[12]}\dots\ket{1}_{[\pi_<(ij)]}\dots\ket{0}_{[N,N-1]}\CR
    &=\sgn(\pi){L^A}^+_{\left(\pi(i)\pi(j)\right)}\ket{0}^A,
\end{align}    
where in the third line we used the notation $[\pi_<(ij)]$ to mean 
\begin{align}
    [\pi_<(ij)] =\begin{cases}
         [\pi(i)\pi(j)],&\text{ if }\pi(i) < \pi(j)\\
         [\pi(j)\pi(i)],&\text{ if }\pi(i) > \pi(j)
    \end{cases}
\end{align}
For example, for $N=3$, $\pi = (13)$ and $(ij)=(12)$, $[\pi_<(12)]=[23]$, i.e., 
\begin{align}
    (13)\left({L^A}^+_{12}\ket{0}^A\right)&=(13)\ket{1}_{[12]}\ket{0}_{[13]}\ket{0}_{[23]}\CR
    &=\ket{1}_{[32]}\ket{0}_{[31]}\ket{0}_{[21]}\CR
    &=\sgn((13))\ket{0}_{[12]}\ket{0}_{[13]}\ket{1}_{[23]}\CR
    &=\sgn((13)){L^A}^+_{23}\ket{0}^A
\end{align}
On the other hand, 
\begin{align}
\pi \left({L^A}^+_{ij}\ket{0}^A\right) &=\left(\pi {L^A}^+_{ij}\pi^{-1}\right)\pi\ket{0}^A\CR
    &=\left(\pi {L^A}^+_{ij}\pi^{-1}\right)\sgn(\pi)\ket{0}^A
\end{align}
Therefore, 
\begin{align}
    \pi {L^A}^+_{ij}\pi^{-1} = {L^A}^+_{\left(\pi(i)\pi(j)\right)}
\end{align}
In particular, let $i<j$ and take $\pi = (ij)$, we find $ \pi {L^A}^+_{ij}\pi^{-1}={L^A}^+_{ji} = {L^A}^+_{ij}$. So, the antisymmetric raising and lowering operators are also symmetric under the permutation of their indices. In summary, the ladder operators for all cases are symmetric under the permutation of their indices, hence there is no need to distinguish their indices with parentheses or square brackets.

\subsection{Occupation Graph Basis}
In many-body quantum mechanics, the occupation number basis is constructed as a suitable basis for the treatment of identical particles. We will find it convenient to also use a similar basis. At this point, however, we are still working with quantum multigraphs with labeled vertices, and the analogous occupation basis, though available, is not commonly used for distinguishable particles in many-body quantum mechanics. 

The new basis, which we call the {\em occupation graph} basis, is constructed as follows. We take each of the tensor product basis kets of the quantum multigraphs (ordered lexicographically) and order them differently as follows; we first list all edge states that are in the single-particle state of $\ket{0}$, then we list all edge states that are in the single-particle state $\ket{1}$, and so forth until finally we list all edge states in the single-particle state $\ket{D-1}$. For example, for $N=3$ case of the directed multigraph (with self-loops) Hilbert space $\m{H}_{tot}$, take the tensor product ket  $\ket{G}=\ket{1}_{11}\ket{2}_{12}\ket{0}_{13}\ket{1}_{21}\ket{2}_{22}\ket{0}_{23}\ket{2}_{31}\ket{1}_{32}\ket{0}_{33}$ shown in Fig.\ref{fig:quantumGraphs}. This will be mapped to $\ket{0}_{13}\ket{0}_{23}\ket{0}_{33}\ket{1}_{11}\ket{1}_{21}\ket{1}_{32}\ket{2}_{12}\ket{2}_{22}\ket{2}_{31}$. 
We represent this new ket as \[\ket{\{13,23,33\},\{11,21,32\},\{12,22,31\}}=\ket{G_0,G_1,G_2},\] where $G_i$ is a graph made up of all edges in the single-particle state $\ket{i}, i\in\{0,1,\dots,D_1\}$. Clearly, this is a bijection between the basis kets and, therefore, is just a different basis of the quantum multigraph Hilbert space. In general, if the single particle Hilbert space is $D$ dimensional, the quantum multigraph Hilbert space with $N$ vertices is spanned by kets of the form $\ket{G_0,G_1,\dots,G_{d-1}}$, where $(G_0,G_1,\dots,G_{d-1})$ is an ordered weak set partition of $[N]\times[N]$ (meaning that the empty set is allowed in the partition). We can count the occupation graph basis kets using Stirling numbers of the second kind $S(n,m)$, which count the number of unordered partitions of a set of size $n$ into exactly $m$ parts:

\begin{align}
    \sum_{k=0}^{D-1}\binom{D}{k}(D-k)!S(N^2,D-k)
    &=\sum_{k=0}^{D}\left(D\right)_{D-k}S(N^2,D-k)\CR
    &=\sum_{k'=0}^{D}\left(D\right)_{k'}S(N^2,k')=D^{N^2}
\end{align}
In the first line, we have grouped the weak partitions by the number of empty parts ($k$). There are $\binom{D}{k}$ ways to choose $k$ empty parts and $(D-k)!S(N^2,D-k)$ ways to make $D-k$ ordered partitions of a set with $N^2$ elements. In the last line, we used the property of Stirling numbers of the second kind as the change of basis matrices, $\sum_k (x)_k S(n,k) = x^n$, from the falling factorials $(x)_n = x(x-1)\dots (x-n+1)$ to the powers $x^n$ and the fact that $S(n,m)=0\;\; \forall m>n, \;\;(n)_m=0\;\;\forall m>n$. 

The same kind of occupation graph basis can be used for the symmetric and antisymmetric quantum multigraph Hilbert spaces $\m{H}^S_{MG},\m{H}^A_{MG}$.
\begin{align}
    \m{H}^{S/A}_{MG}&=\mathrm{Span}\{\ket{G_0,G_1,\dots G_{D-1}}^{S/A}\}
\end{align}
where now $(G_0,G_1,\dots G_{D-1})$ is a weak ordered partition of $\binom{[N]}{2}$, two-combinations of the set $[N]=\{1,2,\dots,N\}$, with the edges $\{i,j\}$ in each $G_k$ listed in increasing order $i<j$. Taking $G_k$ as graphs with vertex set $[N]$ and edge set $G_k$, we find that $(G_0,G_1,\dots G_{D-1})$ is an ordered weak partition of the edge set of the complete graph $K_N$.

We list the following observations.
\begin{itemize}
    \item The action of the symmetric group on the occupation graph bases is 
    \begin{align}
        \pi\ket{G_0,\dots,G_{D-1}}&=\ket{\pi(G_0),\dots,\pi(G_{D-1})}\CR
        \pi\ket{G_0,\dots,G_{D-1}}^S&=\ket{\pi(G_0),\dots,\pi(G_{D-1})}^S,\CR
        \pi\ket{G_0,\dots,G_{D-1}}^A&=\sgn(\pi)\ket{\pi(G_0),\dots,\pi(G_{D-1})}^A
    \end{align} where $\pi(G_i)$ is the usual action of the  permutation on a classical multigraph. Note that permutations do not change the total number of edges in each single particle state. As a result, the symmetric group does not act transitively on the quantum multigraph Hilbert spaces.     
    \item A permutation $\pi$ is an automorphism of a state in occupation basis if 
    \begin{align}
        \pi\ket{G_0,\dots,G_{D-1}}&=\ket{G_0,\dots,G_{D-1}},\CR
        \pi\ket{G_0,\dots,G_{D-1}}^S&=\ket{G_0,\dots,G_{D-1}}^S,\CR
        \pi\ket{G_0,\dots,G_{D-1}}^A&=\sgn(\pi)\ket{G_0,\dots,G_{D-1}}^A
    \end{align}
    So the automorphism group of a quantum graph $\ket{G}=\ket{G_0,\dots,G_{D-1}}$ is the intersection of automorphism groups of $G_0,\dots,G_{D-1}$, i.e., \[\Gamma(\ket{G})=\Gamma(G_0)\cap\dots\cap\Gamma(G_{D-1})\]
    \item The raising operator $L^+_{ij}$ acts on the occupation graph basis kets by shifting the edge $\{i,j\}$ one level up the ladder \[G_0\rightarrow G_1\rightarrow\dots\rightarrow G_{D-1}\rightarrow 0.\] Similarly the lowering operator $L^-_{ij}$ moves edge $\{i,j\}$ down the ladder \[0\leftarrow G_0\leftarrow G_1\leftarrow\dots\leftarrow G_{D-1}\] For example, 
    \begin{align}
        \left(L^+_{12}\right)^3\ket{\{\{12,13,14\},\{\},\{\}\}\}}&=\left(L^+_{12}\right)^2\ket{\{\{13,14\},\{12\},\{\}\}\}}\CR
        &=\left(L^+_{12}\right)\ket{\{\{13,14\},\{\},\{12\}\}\}}\CR
        &=0
    \end{align}
\end{itemize}
\emph{Remark}: The occupation graph basis can be generalized in a straightforward manner to discuss Hilbert spaces of $k$-fold relations. The basis kets of these Hilbert spaces will be weak ordered partitions of the $k$-combinations of $[N]$ into $D$ blocks. These basis kets are in one-to-one correspondence with uniform hypergraphs or $k$-pure simplicial complexes with $N$ vertices.

\subsection{Indicator and Number Operators}
Here we define some operators that will prove useful in later discussions.

 

\begin{defn}
\label{defn.indicator}
    The one particle {\bf indicator} operator $\m{I}^k_{ij}$ defined as 
    \begin{align}
        \m{I}^k_{ij}=\left(L^+_{ij}\right)^k\left(L^-_{ij}\right)^{D-1}\left(L^+_{ij}\right)^{D-1-k},\;\; k\in \{0,1,\dots,D-1\}
        \label{eq:oneParticleNumber}
    \end{align} acting on any occupation graph basis ket gives 1 or 0 depending on whether or not edge $ij$ is in the $k$th graph $G_k$, i.e., 
    \begin{align}
        \m{I}^k_{ij}\ket{G_0,\dots,G_{D-1}}=\begin{cases}
            \ket{G_0,\dots,G_{D-1}}&\text{ if } ij \in G_k\\
            0 & \text{ otherwise }
        \end{cases}
    \end{align}
    Therefore, the one-particle {\bf edge occupation number} $\m{N}^k = \sum_{ij}\m{I}^k_{ij}$ has eigenvalues that count the number of edges in the $k$th one-particle state.
   \begin{align}
       \m{N}^k\ket{G_0,\dots,G_{D-1}}&=\sum_{ij}\m{I}^k_{ij}\ket{G_0,\dots,G_{D-1}}=n_k\ket{G_0,\dots,G_{D-1}},
   \end{align} 
   where the eigenvalue $n_k$ is the number of edges in $G_k$.
 \end{defn}
\begin{defn}
    Let $g(V,E)$ be a multigraph with vertex set $V\subseteq [N]$ and edge set $E=\{e_1,e_2,\dots,e_n\}$, the {\bf subgraph raising and lowering} operators $L^+_g, L^-_g$ are defined as 
    \begin{align}
        L^+_g\equiv \prod_{e\in E}L^+_e,\quad L^-_g\equiv \prod_{e\in E}L^-_e.
    \end{align}
\end{defn}

\begin{defn}
    Let $g(V,E)$ be a multigraph with vertex set $V\subseteq [N]$ and edge set $E=\{e_1,e_2,\dots,e_n\}$, the {\bf subgraph indicator} operator $\m{I}^k_g$ is defined as 
    \begin{align}
        \m{I}^k_g\equiv \prod_{e\in E}\m{I}^k_e,
    \end{align}
    where $\m{I}^k_e$ are the one-particle indicator operators defined in Eq.(\ref{eq:oneParticleNumber}).    
Using the commutation relations of the ladder operators we can also write 
\begin{align}
    \m{I}^k_g=\left(L^+_g\right)^k\left(L^-_g\right)^{D-1}\left(L^+_g\right)^{D-1-k}
\end{align}
\end{defn}
\begin{defn}
Let $g$ be an unlabeled multigraph with $m$ vertices where $m\le N$. There are $l(g)\equiv\binom{N}{m}\frac{m!}{|\Gamma(g)|}$ ways to label the vertices of $g$ using $m$ labels chosen from the set $[N]$ (assume $m\le N$), where each labeled multigraph is distinct as vertex-labeled multigraph. Let $\m{L}(g)=\{g_1,g_2,\dots g_{l(g)}\}$ be the {\em labelings} of $g$, i.e., the set of all vertex-labeled multigraphs $g_i$ that are each isomorphic to $g$. We define the {\bf subgraph occupation number} operator $\m{N}^k_g$ as the sum of the subgraph indicator operators over all labelings,
\begin{align}
    \m{N}^k_g = \sum_{l\in \m{L}(g)}\m{I}^k_l
\end{align}
When it acts on a quantum graph state $\ket{G_0,\dots,G_{D-1}}$, this operator counts the number $n^{g}_k$ of distinct labeled subgraphs in $G_k$ that are isomorphic to $g$.
   \begin{align}
       \m{N}^k_g\ket{G_0,\dots,G_{D-1}}&=n^g_k\ket{G_0,\dots,G_{D-1}},
   \end{align} 
   where the eigenvalue is $n^g_k$.
\end{defn}
In particular, if $g=K_m$, the complete graph with $m$ vertices ($m\le N$), then $|\Gamma(K_m)|=m!$ so there are $\binom{N}{m}$ ways to label the vertices of $K_m$ using labels in $[N]$, and 
\begin{align}
    \m{N}^k_{K_m}=\sum_{\{v_1,\dots,v_m\}\subseteq [N]}\m{I}^k_{K_{\{v_1,\dots,v_m\}}},
\end{align}
where $K_{\{v_1,\dots,v_m\}}$ is the complete graph with vertex set $\{v_1,\dots,v_m\}$.

\section{Unlabeled Quantum Multigraphs}
\label{sec:unlabeledQuantumGraphs}
In this section we construct the Hilbert space of unlabeled quantum multigraphs with $N$ vertices. In many combinatorial problems we are not interested in making a distinction between two isomorphic graphs, i.e., if the only distinction between two graphs is in the labeling of their vertices and not in the underlying adjacency among them, then we do not want to consider them as distinct. This is also the case when studying physical systems made up of identical constituents. This arises in many-body quantum mechanics of identical particles. Although the total Hilbert space is constructed by first assigning arbitrary labels to the particles and treating them as distinguishable, we have to impose proper symmetrization or antisymmetrization procedures (depending on the statistics of the particles) to remove the distinction resulting from the arbitrary labeling. Physically, this amounts to the requirement or restriction that the physical states are the same (up to an overall phase) when we relabel the arbitrary labels assigned to the identical constituents. Mathematically, this means that the physical states transform through simple multiplication by a phase under the action of the symmetric group. This is achieved by projecting the tensor product space down to the 1D invariant subspaces under the action of the symmetric group.  Symmetrization and antisymmetrization procedures are exactly the projections down to the two 1D invariant subspaces. Similar construction can be carried out on quantum multigraph Hilbert spaces.

Before dealing with unlabeled quantum multigraphs, it is instructive to first review the construction of the physical Hilbert space of a system of identical $N$-particles in ordinary many-body quantum mechanics using the occupation-set basis. We start with the assumption that the one-particle  Hilbert space $\m{H}=\mathrm{Span}\{\ket{m}, m=0,1,\dots,D-1\}$ is known. To construct the physical Hilbert space of the $N$ identical particles, we start by treating these $N$ particles as distinguishable by arbitrary assigning them labels (e.g. $1,2,\dots,N$). Now their respective Hilbert spaces are $\m{H}_1, \m{H}_2,\dots,\m{H}_N$. The total Hilbert space of the $N$ ``distinguishable" particle system $\m{H}_{tot}$ is the tensor product $\m{H}_{tot}=\otimes_{i=1}^N\m{H}_i=\mathrm{Span}\{\ket{m_1}_1\ket{m_2}_2\dots\ket{m_n}_n\}$. We define the action of the symmetric group $S_N$ on $\m{H}_{tot}$ given by permutation of the particle labels (without changing the one-particle Hilbert space labels $m_k\in\{0,\dots,D-1\}$) followed by sorting into lexicographic order. So, for example, for $\pi=(12)\in S_N$,

\begin{align} \pi\big(\ket{m_1}_1\ket{m_2}_2\dots\ket{m_n}_n\big) 
&=\ket{m_1}_{\pi(1)}\dots\ket{m_n}_{\pi(n)}=\ket{m_1}_2\ket{m_2}_1\dots\ket{m_n}_n\CR
&=\ket{m_2}_1\ket{m_1}_2\dots\ket{m_n}_n
\end{align}
This action of $S_N$ on $\m{H}_{tot}$ simply permutes the basis states, and therefore results in a $D^N\times D^N$ dimensional permutation representation of $S_N$. 

Analogous to the occupation graph basis, we can use ``occupation set" basis where the tensor product ket $\ket{m_1}_1\ket{m_2}_2\dots\ket{m_n}_n$ is assigned to $\ket{S_0,S_1,\dots,S_{D-1}}$, with $(S_0,\dots,S_{D-1})$ a weak ordered set partition of $[N]$ into $D$ parts and $S_i$ made up of all particles in the $\ket{i}$ one-particle state. The difference is that now we are considering partitions of the set $[N]$ as opposed to partitions of $[N]\times[N]$ or $\binom{[N]}{2}$. A permutation $\pi\in S_N$ acts on an occupation set basis ket $\ket{S_0,S_1,\dots,S_{D-1}}$ as $ \pi\ket{S_0,S_1,\dots,S_{D-1}}=\ket{\pi(S_0),\pi(S_1),\dots,\pi(S_{D-1})}$. This action does not change the sizes of the parts $|S_i| = |\pi(S_i)|$, and so unique orbits under $S_N$ are characterized by unique $D$-tuples $(n_0,n_1,\dots,n_{D-1})$.

Now we demand that the physical states of the identical particle system are those that remain the same (up to multiplication by a phase) when acted on by the permutation group. For $N\ge2$ there are only two 1D irreducible representations of the symmetric group, the trivial representation and the sign representation which assigns each permutation $\pi\in S_N\rightarrow \mathrm{sgn}(\pi)\in\{\pm 1\}$. Therefore, we need to find the irreducible invariant subspaces of the tensor product Hilbert space on which the permutation group acts either trivially or under the sign action. That means that the physical states $\ket{\psi}\in\m{H}_{physical}$ have to obey $\pi\ket{\psi}=\ket{\psi}$ or $\pi\ket{\psi}=\sgn(\pi)\ket{\psi}$ for all $\pi\in S_N$. 

These irreducible invariant subspaces are found via the symmetrizer ($\m{S}$) and antisymmetrizer ($\m{A}$) projection operators which project $\m{H}_{tot}$ to $\m{H}_{physical}$ and are given, for any basis ket in $\psi \in\m{H}_{tot}$ by 
 \begin{align}
    \m{S}\ket{\psi}&=\sqrt{\frac{1}{N!}}\sum_{\pi\in S_N}\pi\ket{\psi}\CR
     \m{A}\ket{\psi}&=\sqrt{\frac{1}{N!}}\sum_{\pi\in S_N}\mathrm{sgn}(\pi)\pi\ket{\psi}
 \end{align}
The normalization factor $\sqrt{1/N!}$ keeps the normalized states normalized after the projection. 

We can see that the unique physical states resulting from the (anti)symmetrization projection are the unique orbits characterized by a $D$-tuple of numbers $(n_0,\dots,n_{D-1})$ corresponding to the number of particles in each single particle state. Each unique physical state is fully described by listing the number of particles in the single particle state $\ket{k}$ for each $k$. So we can move from the occupation set basis to the familiar occupation number basis $\ket{n_0,\dots,n_{D-1}}$. The number of symmetric basis kets of the symmetric physical Hilbert space is given by the number of weak compositions of $N$ into $D$ parts, or $\binom{N+D-1}{N}$. On the other hand, for $D\ge N$, there are $\binom{D}{N}$ states that can be formed which are fully antisymmetric by superimposing occupation sets with all but $\binom{D}{N}$ of the parts $S_k$ empty and the rest having exactly one element. For example, for $D=3, N=2,$ there are three fully antisymmetric kets given by 
\begin{align}
\ket{n_0=0,n_1=1,n_2=1}&=\m{A}\ket{\{\},\{1\},\{2\}},\CR
\ket{n_0=1,n_1=0,n_2=1}&=\m{A}\ket{\{1\},\{\},\{2\}},\CR
\ket{n_0=1,n_1=1,n_2=0}&=\m{A}\ket{\{1\},\{2\},\{\}}\nonumber.
\end{align}
The Pauli exclusion principle is a natural consequence of the antisymmetrization operator which annihilates any basis kets in $\m{H}_{tot}$ if any two or more particles have the same single-particle quantum number, i.e., if any of the parts $S_0,\dots,S_{D-1}$ have more than one element. The principle of symmetrization in quantum mechanics postulates that if the individual particles are Bosons, then we have to symmetrize and if they are Fermions, we have to antisymmetrize. This becomes the Spin-Statistics theorem proven in the context of quantum field theory in $3+1$ dimensions \cite{streater2000pct}. However, outside of the specific setting of $3+1$ dimensional QFT there is no general theorem that ties statistics (symmetrization of antisymmetrization) with spin.

Now we return to quantum graphs. For a fixed number of vertices $N$, the action of the permutation group on the labeled quantum multigraph states provides a $\dim \m{H}\times \dim \m{H}$ dimensional permutation representation of $S_N$. To find the Hilbert space of unlabeled quantum multigraphs, we impose the restriction that the action of the symmetric group on any ``physical" unlabeled quantum multigraph be simply multiplication by a phase, leading to two possibilities for each $1D$ irreducible representation of $S_N$:
\begin{align}
\pi(\ket{G}^S_{unlabeled})&=\ket{G}^S_{unlabeled}\CR
\pi(\ket{G}^A_{unlabeled})&=\mathrm{sgn}(\pi)\ket{G}^A_{unlabeled}
\end{align}

The action of $S_N$ on the quantum multigraphs is such that the relabeling of vertices in any given graph does not change the adjacency structure of the graphs. Therefore, $G$ and $\pi(G)$ are isomorphic graphs for any graph $G$ and any $\pi\in S_N$. As a result, the $\dim\m{H}\times \dim\m{H}$ dimensional representation of $S_N$ on $\m{H}$ will be block diagonal with each block corresponding to a multigraph isomorphism class. However, the blocks themselves are still not irreducible. Each block has a one-dimensional irreducible subspace constructed as follows. 

Given a labeled quantum graph $\ket{G}=\ket{G_0,\dots,G_{D-1}}$, its orbit under the action of $S_N$, $\mathrm{orb}(\ket{G})=\{\pi\ket{G}: \pi\in S_N\}$ has $N!/|\Gamma(\ket{G})|$ unique (up to multiplication by a phase) labeled elements. That is because $S_N$ can be partitioned into the left cosets $S_N/\Gamma(\ket{G})=\{\pi\Gamma(\ket{G}):\pi\in S_N\}$ as $S_N=t_1\Gamma(\ket{G})\cup t_2\Gamma(\ket{G})\cup\dots\cup t_k\Gamma(\ket{G})$, where $t_1,\dots,t_k\in S_N$ are a set of unique coset representatives, and if two permutations $\pi_1,\pi_2$ are in the same coset, then $\pi_1\ket{G}=\pm\pi_2\ket{G}$ since $\pi_1=t_i\sigma_1, \pi_2=t_i\sigma_2$, with $\sigma_1,\sigma_2\in\Gamma(\ket{G})$, and $t_i$ the coset representative, and 
\begin{align}
    \pi_1\ket{G}&=t_i\sigma_1\ket{G}=\mathrm{sgn}(\sigma_1)t_i\ket{G}\CR
    \pi_2\ket{G}&=t_i\sigma_2\ket{G}=\mathrm{sgn}(\sigma_2)t_i\ket{G}.
\end{align}
So, the action of $S_N$ on the orbit $\mathrm{orb}(\ket{G})$ is itself either a permutation representation (called the coset representation) if the Hilbert space is $\m{H}^S$ or a tensor product of the coset representation with the sign representation if the Hilbert space is $\m{H}^A$. Either way, since permutation representations are not irreducible but instead always have the trivial representation as an irreducible subspace, it follows that within each block corresponding to a graph isomorphism class we find a one-dimensional irreducible representation that is either the trivial representation (for $\m{H}^S$) or the sign representation (for $\m{H}^A$). The physical Hilbert space of unlabeled quantum graphs is spanned by these 1D irreducible subspaces. 

The projection operators from the labeled quantum graphs down to the 1D irreducible representations of the unlabeled Hilbert space are the symmetrization $\m{S}$ and antisymmetrization $\m{A}$ rules defined as 
\begin{align}
    \m{S}\ket{G}&=\frac{1}{N!}\sum_{\pi\in S_N}\pi\ket{G}\CR
    \m{A}\ket{G}&=\frac{1}{N!}\sum_{\pi\in S_N}\mathrm{sgn}(\pi)\,\pi\ket{G}
\end{align}
Note that these operators have a different normalization compared to the symmetrizer and antisymmetrizer operators in many-body quantum mechanics. The reason for this choice of normalization is to make the operators idempotent. 
The projection operators can also written as 
\begin{align}
    \m{S}\ket{G}&=\frac{|\Gamma(\ket{G})|}{N!}\sum_{t\in T}t\ket{G}\CR
    \m{A}\ket{G}&=\frac{|\Gamma(\ket{G})|}{N!}\sum_{t\in T}\mathrm{sgn}(t)\,t\ket{G},
\end{align}
where $T=\{t_1,t_2,\dots,t_k\}$ is a set of coset representatives. We defined the unlabaled quantum multigraphs to be the results of these projection operators. 

As an example, consider the labeled antisymmetric quantum multigraph $\ket{G}^A=\ket{\{12,14,23,34\},\{\},\{13,24\}}^A$ shown diagrammatically in Fig.\ref{fig:antisymmetrization}. Its automorphism group is the dihedral group on four vertices, \[\Gamma(\ket{G})=D_4=\{e,(13),(24),(1234),(1432),(12)(34),(13)(24),(14)(23)\},\] where $e$ is the identity permutation. The left coset has three elements with representatives given by $T(S_4/\Gamma(\ket{G}))=\{e,(12),(14)\}$
So the antisymmetrizer gives
\begin{align}
    \m{A}\ket{G}^A=\frac{1}{3}\Big(&\ket{\{12,14,23,34\},\{\},\{13,24\}}^A-(12)\ket{\{12,14,23,34\},\{\},\{13,24\}}^A\CR
    &-(14)\ket{\{12,14,23,34\},\{\},\{13,24\}}^A\Big)\CR
    =\frac{1}{3}\Big(&\ket{\{12,14,23,34\},\{\},\{13,24\}}^A-(-1)^{1}\ket{\{12,24,13,34\},\{\},\{23,14\}}^A\CR
    &-(-1)^{5}\ket{\{24,14,23,13\},\{\},\{34,12\}}^A\Big)\CR
    =\frac{1}{3}\Big(&\ket{\{12,14,23,34\},\{\},\{13,24\}}^A+\ket{\{12,24,13,34\},\{\},\{23,14\}}^A\CR
    &+\ket{\{24,14,23,13\},\{\},\{34,12\}}^A\Big)
\end{align}
The antisymmetrized state corresponding to the unlabeled quantum multigraph is shown in Fig.\ref{fig:antisymmetrization}.
\begin{figure*}[h!]
\centering
    \includegraphics[width=0.7\textwidth]{./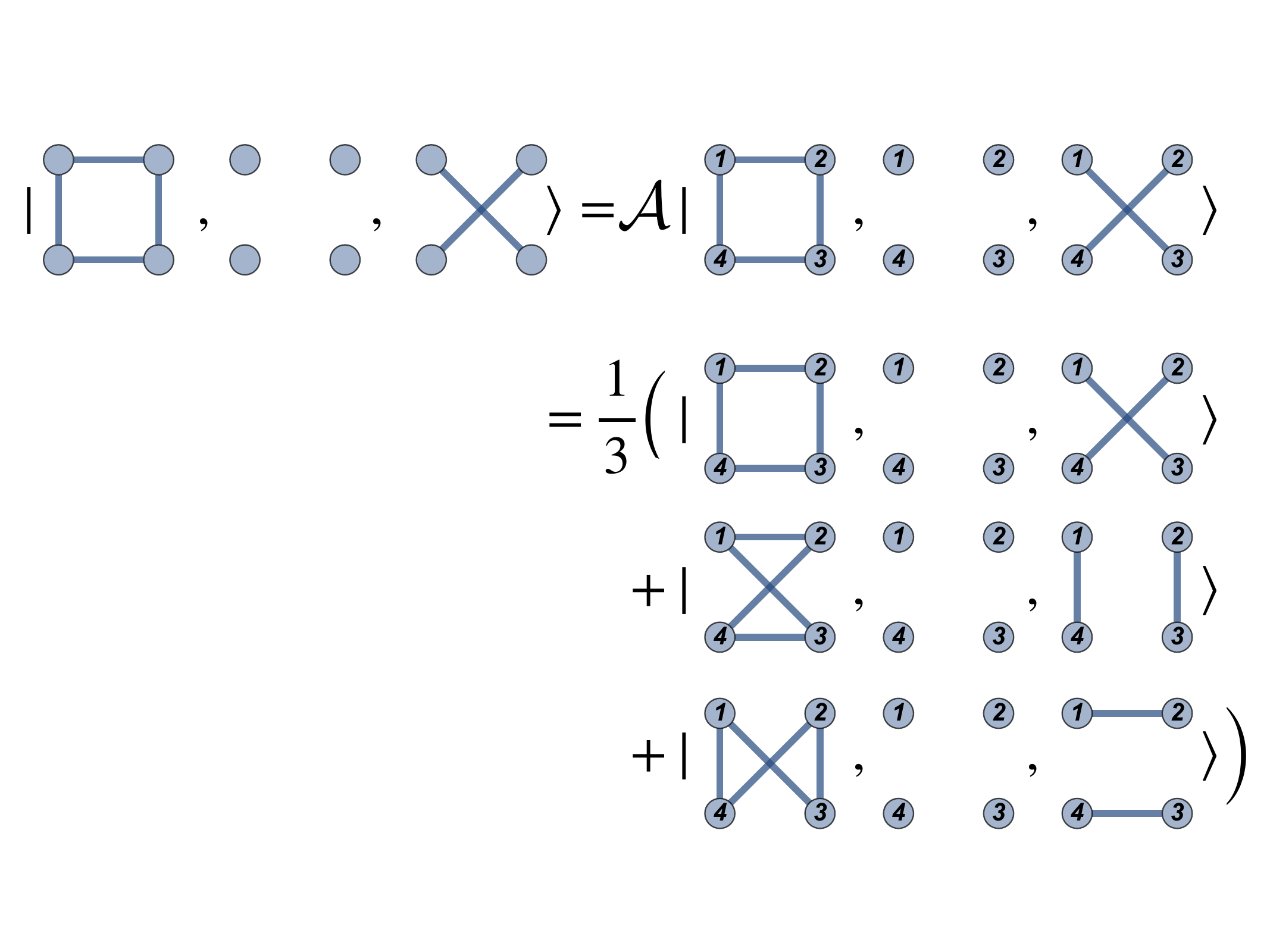}  
\caption{A diagrammatic representation of the unlabeled quantum multigraph given in the occupation graph basis as $\ket{G}^A=\m{A}\ket{\{12,14,23,34\},\{\},\{13,24\}}$ with $N=4$ and $\m{H}_{ij}=\mathrm{Span}\{\ket{0},\ket{1},\ket{2}\}$.}
\label{fig:antisymmetrization}
\end{figure*}
\subsection{Properties of the (anti)symmetrizer}
The projection under the (anti)symmetrizer satisfies all the requirements we have set for the physical Hilbert space of unlabeled quantum graphs. For any $\pi\in S_N$ the following hold: 
\begin{lem}
For any $\pi\in S_N$, 
    \begin{align}
      \pi\m{S}\ket{G}^S=\m{S}\pi\ket{G}^S=\m{S}\ket{G}^S\\
      \pi\m{A}\ket{G}^A=\m{A}\pi\ket{G}^A=\mathrm{sgn}(\pi)\m{A}\ket{G}^A
    \end{align}    
\end{lem}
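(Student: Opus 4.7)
The statement is the standard invariance/equivariance property of group‐averaging projectors, and I would prove it entirely through the rearrangement theorem for the finite group $S_N$: multiplication on the left or right by a fixed element $\pi\in S_N$ induces a bijection $S_N\to S_N$, and this bijection either preserves the sign character (when the sign of $\pi$ is factored out) or leaves the trivial character unchanged. The linearity of $\pi$, $\m{S}$, and $\m{A}$ as operators on $\m{H}^S_{MG}$ or $\m{H}^A_{MG}$ lets me freely move $\pi$ inside the finite sums that define $\m{S}$ and $\m{A}$, so the whole argument reduces to relabeling the summation variable.

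For the symmetrizer I would start by writing
\begin{align}
\pi\,\m{S}\ket{G}^S=\tfrac{1}{N!}\sum_{\sigma\in S_N}(\pi\sigma)\ket{G}^S,
\end{align}
and then substitute $\tau=\pi\sigma$. Since $\sigma\mapsto\pi\sigma$ is a bijection of $S_N$, the sum reruns over all of $S_N$ and equals $\m{S}\ket{G}^S$. The second equality $\m{S}\pi\ket{G}^S=\m{S}\ket{G}^S$ is handled identically with $\tau=\sigma\pi$. For the antisymmetrizer the same substitution carries an extra sign: from $\mathrm{sgn}(\pi\sigma)=\mathrm{sgn}(\pi)\mathrm{sgn}(\sigma)$ I get
\begin{align}
\pi\,\m{A}\ket{G}^A
=\tfrac{1}{N!}\sum_{\sigma\in S_N}\mathrm{sgn}(\sigma)\,(\pi\sigma)\ket{G}^A
=\mathrm{sgn}(\pi)\tfrac{1}{N!}\sum_{\tau\in S_N}\mathrm{sgn}(\tau)\,\tau\ket{G}^A
=\mathrm{sgn}(\pi)\m{A}\ket{G}^A,
\end{align}
after pulling $\mathrm{sgn}(\pi)=\mathrm{sgn}(\pi^{-1})$ out front and using the bijection $\sigma\mapsto\pi\sigma$. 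The proof of $\m{A}\pi\ket{G}^A=\mathrm{sgn}(\pi)\m{A}\ket{G}^A$ proceeds in the same way with $\tau=\sigma\pi$ and $\mathrm{sgn}(\sigma)=\mathrm{sgn}(\tau)\mathrm{sgn}(\pi)$.

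There is really no serious obstacle here; the only thing that needs a moment of care is that in the antisymmetric Hilbert space the action of $\pi$ on a single occupation-graph ket already carries a factor of $\mathrm{sgn}(\pi)$. But that factor is part of the definition of the unitary action used throughout the paper, so when $\pi$ is written as an operator on $\m{H}^A_{MG}$ it is already the signed permutation; the group-theoretic rearrangement argument applies to the abstract operators $\pi\in S_N$ viewed as a group, independently of the particular representation chosen. In that sense the lemma is essentially an instance of the general fact that for any one-dimensional character $\chi$ of a finite group $G$, the averaging operator $P_\chi=|G|^{-1}\sum_{g}\chi(g^{-1})\,g$ satisfies $g\,P_\chi=P_\chi\,g=\chi(g)\,P_\chi$; here $\chi$ is the trivial character for $\m{S}$ and the sign character for $\m{A}$.
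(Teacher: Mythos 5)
Your proof is correct and uses the same argument as the paper: a change of summation variable $\tau=\pi\sigma$ (or $\sigma\pi$) via the rearrangement theorem, together with multiplicativity of $\mathrm{sgn}$ for the antisymmetric case. The closing observation that this is the general identity $gP_\chi=P_\chi g=\chi(g)P_\chi$ for a one-dimensional character is a nice framing but adds nothing the paper's computation does not already contain.
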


\begin{proof}
    \begin{align}
    \pi\m{S}\ket{G}^S &=\frac{1}{N!}\sum_{\sigma\in S_N}\pi\sigma\ket{G}^S=\frac{1}{N!}\sum_{\sigma'\in S_N}\sigma'\ket{G}^S,\;\text{ where }\sigma'=\pi\sigma\CR
    &=\m{S}\ket{G}^S\CR
    \m{S}\pi\ket{G}^S &=\frac{1}{N!}\sum_{\sigma\in S_N}\sigma\pi\ket{G}^S=\frac{1}{N!}\sum_{\sigma''\in S_N}\sigma''\ket{G}^S,\;\text{ where }\sigma''=\sigma\pi\CR
    &=\m{S}\ket{G}^S\nonumber
    \end{align}
    \begin{align}
    \pi\m{A}\ket{G}^A&=\frac{1}{N!}\sum_{\sigma\in S_N}\mathrm{sgn}(\sigma)\pi\sigma\ket{G}^A=\frac{1}{N!}\sum_{\sigma\in S_N}\mathrm{sgn}(\pi)\mathrm{sgn}(\pi\sigma)\pi\sigma\ket{G}^A\CR
    &=\mathrm{sgn}(\pi)\frac{1}{N!}\sum_{\sigma'\in S_N}\mathrm{sgn}(\sigma')\sigma'\ket{G}^A\;\text{ where }\sigma'=\pi\sigma\CR
    &=\mathrm{sgn}(\pi)\m{A}\ket{G}^A\CR
    \m{A}\pi\ket{G}^A&=\frac{1}{N!}\sum_{\sigma\in S_N}\mathrm{sgn}(\sigma)\sigma\pi\ket{G}^A=\frac{1}{N!}\sum_{\sigma\in S_N}\mathrm{sgn}(\pi)\mathrm{sgn}(\sigma\pi)\sigma\pi\ket{G}^A\CR
    &=\mathrm{sgn}(\pi)\frac{1}{N!}\sum_{\sigma''\in S_N}\mathrm{sgn}(\sigma'')\sigma''\ket{G}^A\;\text{ where }\sigma''=\sigma\pi\CR
    &=\mathrm{sgn}(\pi)\m{A}\ket{G}^A\nonumber    
    \end{align}
where we have used the property of the $\mathrm{sgn}$ function: $\mathrm{sgn}(\pi_1\pi_2)=\mathrm{sgn}(\pi_1)\mathrm{sgn}(\pi_2)$.
\end{proof}

\begin{lem}
Symmetrizing the antisymmetric kets or antisymmetrizing the symmetric kets always results in their annihilation, i.e., $\m{A}\ket{G}^S = 0$, and $\m{S}\ket{G}^A = 0$.
\end{lem}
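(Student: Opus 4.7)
The plan is to interpret $\ket{G}^S$ and $\ket{G}^A$ as the fully symmetric and fully antisymmetric kets whose transformation properties were established in the preceding lemma, namely $\pi\ket{G}^S=\ket{G}^S$ and $\pi\ket{G}^A=\mathrm{sgn}(\pi)\ket{G}^A$ for every $\pi\in S_N$. These are exactly the properties obeyed by any ket in the image of $\mathcal{S}$ or $\mathcal{A}$ respectively, so the reading is consistent with how the projectors have just been used in the paper. Indeed, if one instead tries to read $\ket{G}^S$ as an arbitrary labeled basis ket of $\mathcal{H}^S_{MG}$, the statement fails whenever $\Gamma_{cl}(G)$ happens to lie entirely inside $A_N$, so the transformation-law reading is the one that makes the lemma meaningful.

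With this in hand, the first identity $\mathcal{A}\ket{G}^S=0$ is a one-line computation. Substituting $\pi\ket{G}^S=\ket{G}^S$ into the definition of $\mathcal{A}$ lets me factor $\ket{G}^S$ out of the sum, leaving the scalar $\tfrac{1}{N!}\sum_{\pi\in S_N}\mathrm{sgn}(\pi)$ as the only thing that matters. I then invoke the fact that $\mathrm{sgn}$ is a surjective group homomorphism $S_N\to\{\pm1\}$ for $N\ge 2$, so its kernel $A_N$ has index two and the even and odd permutations cancel exactly, giving a prefactor of zero.

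The second identity $\mathcal{S}\ket{G}^A=0$ is structurally identical. Using $\pi\ket{G}^A=\mathrm{sgn}(\pi)\ket{G}^A$ in the defining sum for $\mathcal{S}=\tfrac{1}{N!}\sum_\pi \pi$ turns it into $\tfrac{1}{N!}\sum_\pi\mathrm{sgn}(\pi)\ket{G}^A$, which vanishes by the same counting argument. Conceptually this is nothing more than the statement that the trivial and sign representations of $S_N$ are inequivalent one-dimensional irreducible representations and that $\mathcal{S}$, $\mathcal{A}$ are the corresponding mutually orthogonal projectors, so $\mathcal{S}\mathcal{A}=\mathcal{A}\mathcal{S}=0$ on the whole Hilbert space.

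There is no real obstacle in this proof; the calculation is a two-liner in each direction. The only thing requiring any care is the standing assumption $N\ge 2$: for $N=1$ the group $S_1$ is trivial, $\sum_\pi\mathrm{sgn}(\pi)=1\neq 0$, and the two projectors collapse into the identity, so the statement would fail---consistent with there being nothing meaningful to antisymmetrize over a single vertex. I would flag this as an implicit hypothesis of the lemma rather than an obstruction to the argument.
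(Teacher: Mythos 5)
Your proof is correct and takes essentially the same route as the paper's: the paper proves the lemma in the form $\m{A}\m{S}\ket{G}=0$ and $\m{S}\m{A}\ket{G}=0$, substituting the transformation law from the preceding lemma and factoring out the vanishing scalar $\sum_{\pi\in S_N}\sgn(\pi)=0$. Your reading of $\ket{G}^{S}$ and $\ket{G}^{A}$ as the projected (fully symmetric/antisymmetric) kets is exactly what the paper's own proof implicitly adopts, and your remarks on the $N\ge 2$ caveat and on why the arbitrary-basis-ket reading would fail are accurate but ancillary.
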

\begin{proof}
\begin{align}   
\m{A}\m{S}\ket{G}&=\frac{1}{N!}\sum_{\pi\in S_N}\sgn(\pi)\pi\m{S}\ket{G}\CR
&=\frac{1}{N!}\left(\sum_{\pi\in S_N}\sgn(\pi)\right)\m{S}\ket{G}=0\CR
\m{S}\m{A}\ket{G}&=\frac{1}{N!}\sum_{\pi\in S_N}\pi\m{A}\ket{G}\CR
&=\frac{1}{N!}\sum_{\pi\in S_N}\sgn(\pi)\m{A}\ket{G}\CR
&=\frac{1}{N!}\left(\sum_{\pi\in S_N}\sgn(\pi)\right)\m{A}\ket{G}=0\nonumber
\end{align}
\end{proof}

\begin{lem}
Both the symmetrizer and antisymmetrizer are projection operators (they are idempotent,  $\m{S}\m{S}=\m{S}$, and $\m{A}\m{A}=\m{A}$). Note, due to the normalization we chose, a normalized labeled graph basis state doesn't stay normalized after projection; i.e., for a labeled state $\ket{G}^A=\ket{G_0,\dots,G_{D-1}}^A$, if $^A\braket{G}{G}^A=1$, then $^A\braket{\m{A}G}{\m{A}G}^A=1/N!$.
\end{lem}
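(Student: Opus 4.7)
The plan is to handle idempotency and the norm claim separately, leaning heavily on the previous lemma as the workhorse. The proof of that lemma actually established the operator identities $\pi\m{S}=\m{S}$ and $\pi\m{A}=\sgn(\pi)\m{A}$ (nothing about the specific ket $\ket{G}$ was used beyond linearity of $\pi$), so I would first promote those statements to the operator level. With these in hand, idempotency is immediate: for any basis ket,
\begin{align}
\m{S}\m{S}\ket{G}^S = \frac{1}{N!}\sum_{\pi\in S_N}\pi\,\m{S}\ket{G}^S = \frac{1}{N!}\sum_{\pi\in S_N}\m{S}\ket{G}^S = \m{S}\ket{G}^S,
\end{align}
and analogously $\m{A}\m{A}\ket{G}^A = \frac{1}{N!}\sum_{\pi\in S_N}\sgn(\pi)\pi\m{A}\ket{G}^A = \frac{1}{N!}\sum_{\pi\in S_N}\sgn(\pi)^2\m{A}\ket{G}^A = \m{A}\ket{G}^A$, since $\sgn(\pi)^2=1$.

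For the norm I would first establish self-adjointness. Each $\pi\in S_N$ acts as a unitary on the labeled Hilbert space (in the antisymmetric case it is a signed permutation of an orthonormal basis), so $\pi^\dagger=\pi^{-1}$; reindexing the defining sums by $\pi\mapsto\pi^{-1}$ and using $\sgn(\pi^{-1})=\sgn(\pi)$ then gives $\m{S}^\dagger=\m{S}$ and $\m{A}^\dagger=\m{A}$. Combined with idempotency, this collapses the squared norm to a single matrix element,
\begin{align}
{}^A\braket{\m{A}G}{\m{A}G}^A = {}^A\bra{G}\m{A}^\dagger\m{A}\ket{G}^A = {}^A\bra{G}\m{A}\ket{G}^A = \frac{1}{N!}\sum_{\pi\in S_N}\sgn(\pi)\,{}^A\bra{G}\pi\ket{G}^A,
\end{align}
and analogously for $\m{S}$. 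Since distinct labeled occupation-graph basis kets are orthogonal, only permutations sending $\ket{G}^A$ to a scalar multiple of itself contribute. The sign-counting inherent in the antisymmetric action of $S_N$ (each swap $[ij]\to[ji]$ gives a factor $-1$, and the total number of such swaps across all edge pairs equals the inversion count of $\pi$) forces that scalar to equal $\sgn(\pi)$, which is exactly the defining condition $\pi\in\Gamma(\ket{G}^A)$. Each surviving term contributes $\sgn(\pi)^2=1$.

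The sum therefore collapses to $|\Gamma(\ket{G}^A)|/N!$, which recovers the stated value $1/N!$ precisely in the generic case where the quantum multigraph has trivial automorphism group; in general the squared norm carries the extra factor $|\Gamma(\ket{G}^A)|$, and the identical argument gives ${}^S\braket{\m{S}G}{\m{S}G}^S=|\Gamma(\ket{G}^S)|/N!$. The main obstacle is not computational — once the operator identities from the preceding lemma are in place and the orthogonality of distinct labeled occupation-graph kets is used, everything reduces mechanically — but rather the careful bookkeeping step of verifying that the scalar $c$ in $\pi\ket{G}^A=c\ket{G}^A$ must equal $\sgn(\pi)$ whenever it is nonzero, so that off-automorphism terms can be discarded cleanly and the surviving sign factors do not introduce cancellations.
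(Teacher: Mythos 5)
Your idempotency argument is exactly the paper's: both rest on the identities $\pi\m{S}=\m{S}$ and $\pi\m{A}=\sgn(\pi)\m{A}$ from the preceding lemma, and you are right that those were proved there at the operator level, so nothing more is needed. The paper's proof stops there and leaves the normalization remark unproved; your treatment of it is the genuinely new content, and it is correct. Self-adjointness of $\m{A}$ follows from unitarity of each $\pi$ and $\sgn(\pi^{-1})=\sgn(\pi)$, the collapse to $\frac{1}{N!}\sum_{\pi}\sgn(\pi)\,{}^A\bra{G}\pi\ket{G}^A$ is right, and your bookkeeping that $\pi\ket{G}^A=\sgn(\pi)\ket{\pi(G)}^A$ (the accumulated sign being $(-1)^{\mathrm{inv}(\pi)}=\sgn(\pi)$) correctly identifies the surviving terms with $\Gamma(\ket{G}^A)$, each contributing $+1$. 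Note, however, that your final answer $^A\braket{\m{A}G}{\m{A}G}^A=|\Gamma(\ket{G}^A)|/N!$ does not match the lemma as stated: the paper asserts $1/N!$ for an arbitrary normalized basis ket, which holds only when the automorphism group is trivial. Your formula is the correct general one — e.g.\ $\m{A}\ket{0}^A=\frac{1}{N!}\sum_\pi\sgn(\pi)^2\ket{0}^A=\ket{0}^A$ has norm $1$, not $1/\sqrt{N!}$, consistent with $|\Gamma(\ket{0}^A)|=N!$ — so you have in effect sharpened (and corrected) the paper's note rather than contradicted yourself. It would be worth flagging explicitly that the stated $1/N!$ is the generic (asymmetric-graph) case only.
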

\begin{proof}
\begin{align}
    \m{S}\m{S}\ket{G}^S&=\frac{1}{N!}\sum_{\pi\in S_N}\pi\m{S}\ket{G}^S=\frac{1}{N!}\sum_{\pi\in S_N}\m{S}\ket{G}^S\CR
    &=\frac{1}{N!}N!\m{S}\ket{G}^S=\ket{G}^S\nonumber
\end{align}        
\begin{align}
    \m{A}\m{A}\ket{G}^A&=\frac{1}{N!}\sum_{\pi\in S_N}\sgn(\pi)\pi\m{A}\ket{G}^A\CR
    &=\frac{1}{N!}\sum_{\pi\in S_N}\sgn(\pi)\sgn(\pi)\m{A}\ket{G}^A\CR
    &=\frac{1}{N!}N!\m{A}\ket{G}^A=\m{A}\ket{G}^A\nonumber
\end{align}        
\end{proof}

Analogous to the way we represented labeled quantum multigraph states $\ket{G}=\ket{G_0,\dots,G_{D-1}}$, where the $G_i$ are labeled graphs in the single particle state $i$ and $(G_0,\dots,G_{D-1})$ is a weak ordered set partition of the edge space, we can similarly represent unlabeled quantum graphs as 
$\ket{G}^A_u\equiv\m{A}\ket{G}^A$ or $\ket{G}^S_u\equiv \m{S}\ket{G}^S$ as $\ket{G}_u=\ket{G_0^u,\dots,G_{D-1}^u}$, where each $G_i^u$ is unlabeled graph and $(G_0^u,\dots,G_{D-1}^u)$ is a weak ordered partition of the complete unlabeled graph on $N$ vertices.

\subsection{Unlabeled Operators}
All operators in the labeled quantum graph Hilbert space have counterparts that act on the unlabeled Hilbert space found through the (anti)symmetrization projections. The unlabeled operators in the antisymmetric quantum multigraph Hilbert space can be defined through their matrix elements as
\begin{align}
    \bra{G}\m{A}\m{O}_\ell \m{A}\ket{G'}&=\bra{G}\m{A}(\m{A}\m{O}_\ell \m{A})\m{A}\ket{G'}\CR
    \m{O}_u&=\m{A}\m{O}_\ell \m{A}
\end{align}
For example, the ladder operators $L^\pm_{ij}$ in the labeled quantum graph Hilbert space have counterparts in the antisymmetric Hilbert space given by
\begin{align}
    L^\pm &=\m{A}L^\pm_{ij}\m{A}=\frac{1}{N!}\sum_{\pi\in S_N}\pi L^\pm_{ij}\pi\CR
    &=\frac{1}{N!}\sum_{\pi\in S_N}L^\pm_{\left(\pi(i)\pi(j)\right)}\CR
    &=\frac{1}{N!}\left(L^\pm_{12}+L^\pm_{13}+\dots+L^\pm_{N-1,N}\right),
\end{align}
Similarly, the subgraph ladder operators project to
\begin{align}
    L_{g_u}^\pm &=\m{A}L^\pm_{g_l}\m{A}=\frac{1}{N!}\sum_{\pi\in S_N}\pi L^\pm_{g_l}\pi\CR
    &=\frac{1}{N!}\sum_{\pi\in S_N}L^\pm_{\pi(g_l)}\CR
    &=\frac{|\Gamma(g_l)|}{N!}\sum_{g'}L^\pm_{g'},
\end{align}
where $g_u$ is an unlabeled graph and $g_l$ is a labeling of $g_u$. In the last line, the sum is done over all labeled graphs $g'$ that are isomorphic to $g_u$.

One consequence of this is that there are no ``local" vertex operators. More precisely, any vertex operator $\m{O}^k_i$ acting on the (anti)symmetrized states produces the average value of $\langle \m{O}^k\rangle = (1/N)\sum_i\m{O}^k_i$. For example, consider the degree operator $\deg^k_i$ which counts the degree of vertex $i$ in the $k$th one-particle state graph $G_k$ defined as
\begin{align}
    \deg^k_i=\sum_{j}\m{I}^k_{ij},
\end{align}
the projection of $\deg^k_i$ to the unlabeled Hilbert space is 
\begin{align}
    \deg^k&=\m{A}\deg^k_i\m{A}=\frac{1}{N!}\sum_{\pi\in S_N}\deg^k_{\pi(i)}\CR
    &=\frac{1}{N}\sum_{i=1}^N\deg^k_i
\end{align}
For illustration, consider the labeled antisymmetric graph state\\
$\ket{G}^A=\ket{\{12,23,24\},\{\},\{13,14,34\}}^A$ and the unlabeled antisymmetric graph state $\m{A}\ket{G}^A$ shown in Fig.\ref{fig:nolocal}, then 
\begin{align}
    \deg^0_2\ket{G}&=3\ket{G};\,\deg^0_1\ket{G}=\deg^0_3\ket{G}=\deg^0_4\ket{G}=1\ket{G},\text{ and}\CR
    \deg^0\left(\m{A}\ket{G}^A\right)&=\frac{3}{2}\left(\m{A}\ket{G}^A\right)
\end{align}

\paragraph*{Remark on quantum symmetries.}
The projection to the unlabeled Hilbert space implements invariance under the classical permutation group $S_N$. Foundationally, this sits within the theory of \emph{quantum} automorphism groups of finite sets and graphs \cite{Wang1998CMP,Bichon2003PAMS}, which generalize $S_N$ to compact matrix quantum groups and have seen renewed interest for multigraphs \cite{goswami2023quantum,goswami2024quantum}. Our construction here only requires the classical action of $S_N$, but the operator‑algebraic perspective clarifies why only permutation‑invariant combinations of vertex observables survive the projection.

\begin{figure}
\centering
    \includegraphics[width=0.7\textwidth]{./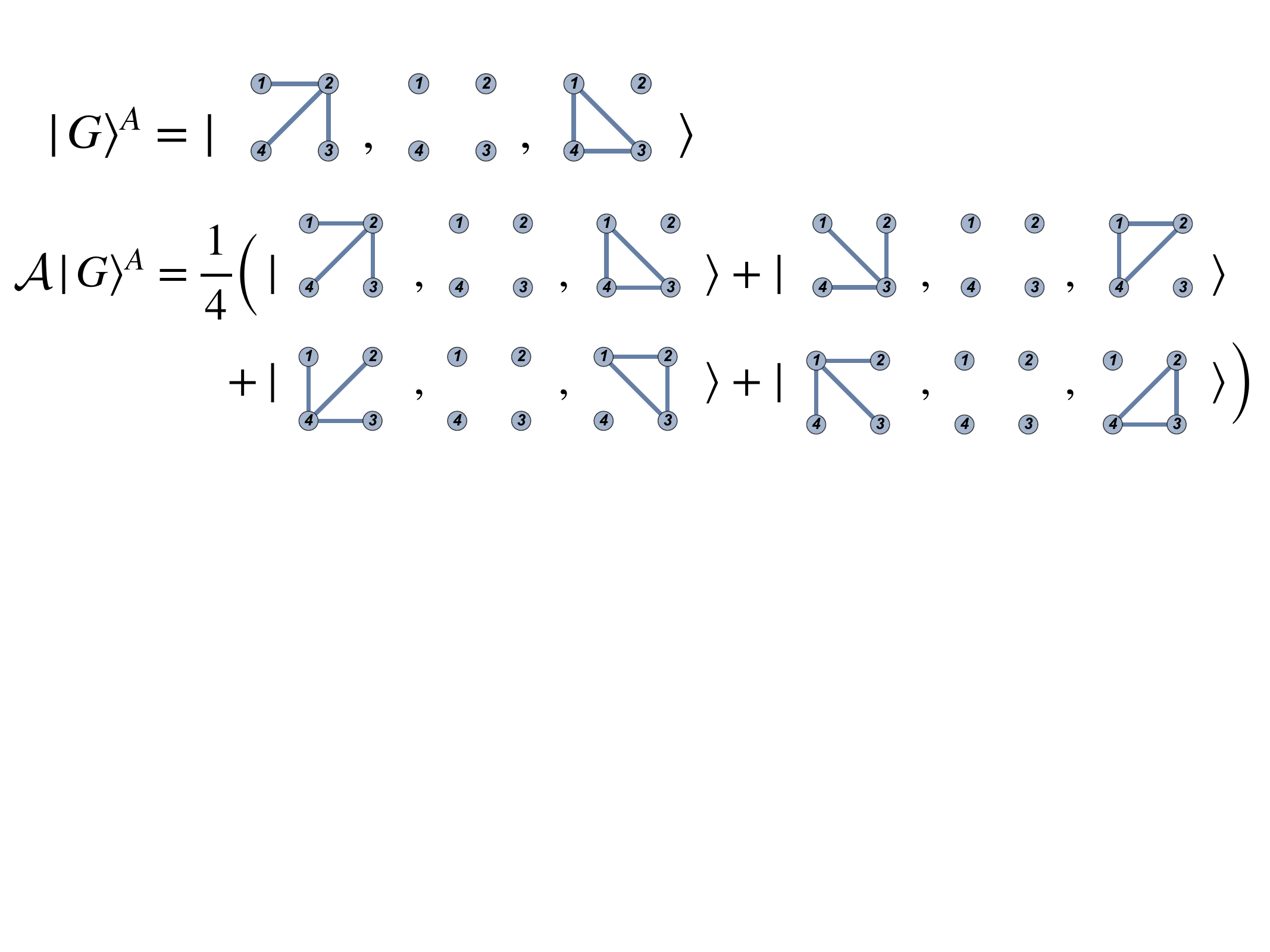}  
\caption{A labeled graph $\ket{G}^A$ (top row) and its unlabeled version $\m{A}\ket{G}^A$ (bottom row). Local vertex observables in the labeled Hilbert space will be projected to global operators in the unlabeled Hilbert space that correspond to taking averages in the labeled Hilbert space. }
\label{fig:nolocal}
\end{figure}

\section{Dynamical Quantum Graphs}
\label{sec:dynamics}
We now introduce a general dynamics on quantum multigraphs. After formulating a general Hamiltonian that is diagonal in the occupation graph basis, we focus specifically on quantum graphs (i.e., $D=2$) and study the thermodynamics of two simple Hamiltonians. The first is the non-interacting or free Hamiltonian $(H_0)$ and the second is a Hamiltonian of nearest-neighbor Ising type interaction $(H_{Ising})$. The thermodynamics of these Hamiltonians is studied numerically through Monte Carlo simulations for both labeled and unlabeled quantum graphs. The main results of this section are the following. We give analytic and simulation results that show:
\begin{enumerate}
    \item The free labeled quantum graph system effectively becomes the Erd\H{o}s--R\'enyi--Gilbert random graph model; it has no thermodynamic phase transitions.
    \item The Ising model of the labeled quantum graphs does not have thermodynamic phase transitions.
    \item The unlabeled quantum graphs, both the free system and the ferromagnetic Ising system, appear to have proper thermodynamic phase transitions marked by diverging specific heat and critical slowing down near the phase transition. This phase transition is associated with the emergence of a connected component in the excited state that includes almost all vertices.
    \item The unlabeled Ising antiferromagnet has no phase transition. In addition, the labeled and unlabeled antiferromagnets have converging thermodynamic functions.
\end{enumerate}

To introduce dynamics to quantum multigraphs, first consider the case where the one-particle Hilbert space $\m{H}=\mathrm{Span}\{\ket{0}\dots,\ket{D-1}\}$ is spanned by eigenvectors of some one-particle Hamiltonian $H$, with $H\ket{n}=E_n\ket{n}$. Without loss of generality, we assume that the one-particle eigenstates $\ket{n}$ are ordered in a nondecreasing order, i.e., $E_n\le E_{n+1}$ for all $0\le n\le D-1$. Such a one-particle Hamiltonian extends to a Hamiltonian on the quantum multigraph Hilbert space in the non-interacting system as
\begin{align}
    H_0&=\sum_{ij}H_{ij}, \;\text{where}\CR
    H_{ij}&=I_{11}\otimes I_{12}\otimes\dots\otimes H_{ij}\otimes\dots\otimes I_{NN},
\end{align}
where by abuse of notation we use $H_{ij}$ to now represent the operator on the quantum multigraph Hilbert space. The energy of the quantum graph $\ket{G_0,\dots,G_{D-1}}$ is given by 
\begin{align}
    H_0\ket{G_0,\dots,G_{D-1}}&=\left(\sum_{k=0}^{D-1}E_k\m{N}^k\right)\ket{G_0,\dots,G_{D-1}}\CR
    &=\left(\sum_{k=0}^{D-1}E_kn_k\right)\ket{G_0,\dots,G_{D-1}},
\end{align}
where $n_k$, the number of edges in $G_k$, is the eigenvalue of the edge occupation number operator $\m{N}^k$. 

We can define a general interacting Hamiltonian that is diagonal in the occupation graph basis as follows. First, for a given graph isomorphism class $g$, define a ``graph energy" operator $H_g$ as 
\begin{align}
    H_g=\sum_{k=0}^{D-1} E^k_g\m{N}^k_g,
\end{align}
where $\m{N}^k_g$ counts the number of subgraphs in the $k$th one-particle state that are isomorphic to $g$ and $E^k_g$ gives the contribution to the total energy of this graph. Then, the most general Hamiltonian that is diagonal in the occupation graph basis is given by summing $H_g$ over all graph isomorphism classes $g$ with some arbitrary weight $f(g)$:
\begin{align}
    H=\sum_g f(g)\sum_{k=0}^{D-1}E^k_g\m{N}^k_g,
\end{align}
The free Hamiltonian $H_0$ is a special case where $f(g)$ is 0 unless $g$ is an edge and $E^k_g=E_k$. 

We consider a simple interacting Hamiltonian that we will term the ``Ising" Hamiltonian $H_{Ising}$. The Ising Hamiltonian is inspired by the nearest-neighbor interaction in the Ising model \cite{PhysRev.65.117, feynman1998statistical}. The nearest neighbors of the edge $\{v_1,v_2\}$ in the complete graph are all edges of the form $\{v_i,v_1\}$ and $\{v_2,v_j\}$. Then, the Ising Hamiltonian is given by 
\begin{align}
    H_{Ising}=\sum_{k=0}^{D-1}E_k\sum_{\{i,j,\ell\}\subset[N]}\left(\m{I}^k_{ij}\m{I}^k_{i\ell}+\m{I}^k_{ij}\m{I}^k_{j\ell}+\m{I}^k_{i\ell}\m{I}^k_{j\ell}\right)
\end{align}
\begin{figure}
\centering
    \includegraphics[width=0.7\textwidth]{./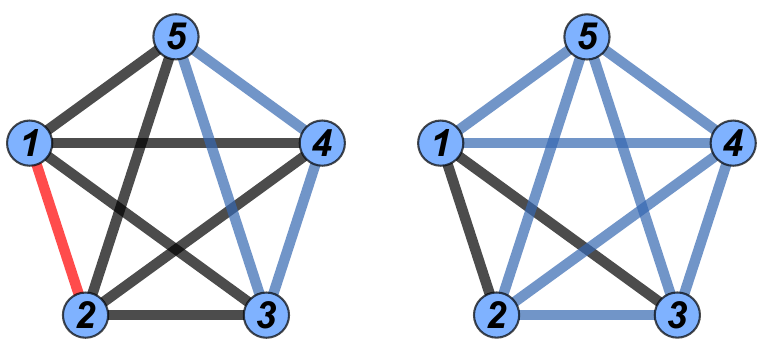}  
\caption{(Left) The edge $\{1,2\}$ (highlighted in red), its six neighboring edges (in black), and three edges that are not adjacent to it (in blue). (Right) Highlighted in black is an angle subgraph terminating at the vertices $2$ and $3$.}
\label{fig:IsinghamiltonianIllustration}
\end{figure}
For example, for the simple case of $N=3$, the Ising Hamiltonian is given by 
\begin{align}
H_{Ising}&=\sum_{k=0}^{D-1}\left( \m{I}^k_{12}\m{I}^k_{13}+\m{I}^k_{12}\m{I}^k_{23}+\m{I}^k_{13}\m{I}^k_{23}\right)
\end{align}
The operator $\sum_{\{i,j,\ell\}\subset[N]}\left(\m{I}^k_{ij}\m{I}^k_{i\ell}+\m{I}^k_{ij}\m{I}^k_{j\ell}+\m{I}^k_{i\ell}\m{I}^k_{j\ell}\right)$ basically counts the number of subgraphs in the $k$th one-particle state isomorphic to the ``angle" graph (see Fig.\ref{fig:IsinghamiltonianIllustration}), i.e., two vertices both connected to a third vertex. So, the Ising Hamiltonian can be written as
\begin{align}
\label{eq:IsingHamiltonian}
    H_{Ising}=\sum_{k=0}^{D-1}E_k\left( \frac{1}{2}\sum_{i\neq j}b^k_{ij}\right) = \sum_{k=0}^{D-1}E_kb^k,
\end{align}
where $b_{ij}^k$ is the number of 2-walks between vertices $i$ and $j$ in graph $G_k$ and $b^k$ is the total number of these angle subgraphs in the $k$th graph. Since the number of $m$-walks between vertices $i$ and $j$ is given in terms of the adjacency matrix $A$ as $(A^m)_{ij}$, the Ising Hamiltonian can be expressed in terms of the adjacency matrices of the graphs $G_0, G_1,\dots, G_{D-1}$. Using the same symbols $G_k$ to represent the adjacency matrix of graph $G_k$, we can write the Hamiltonian as 
\begin{align}
    H_{Ising}=\frac{1}{2}\sum_{k=0}^{D-1}E_k \sum_{i\neq j}(G_k)^2_{ij},
\end{align}
Alternatively, a vertex with degree $d$ has $\binom{d}{2}$ pairs of vertices that are connected to it. Therefore, 
\begin{align}
\label{eq:HIsingDegreeVsAngleCount}
    H_{Ising}&=\sum_{k=0}^{D-1}E_k \sum_{i=1}^N \binom{d^k_i}{2},
\end{align}
where $d^k_i$ is the degree of vertex $i$ in the $k$th graph $G_k$.

We are interested in the thermodynamic properties of the dynamical quantum multigraphs in the canonical ensemble for the two Hamiltonians (free and Ising) introduced above. We will restrict our focus to quantum graphs, i.e., the case where the one-particle Hilbert space is 2-dimensional, so that the occupation set basis kets are of the form $\ket{\{G_0,G_1\}}$, where $G_1$ is the graph complement of $G_0$. A graph and its complement have the same automorphism group; therefore, after antisymmetrizing the unlabeled quantum graph basis kets are enumerated simply by unlabeled graphs. 

We are interested in, (1). the zero temperature or ground state(s), (2). phase transitions, and (3). the typical graph in each phase. We will study the dynamics in both labeled and unlabeled systems in order to see the effect of vertex labeling on the thermodynamics. Analytic computations are difficult for all cases except for the free labeled system; therefore, we will rely on Monte Carlo simulations. 

In the canonical ensemble, at inverse temperature $\beta$, the probability of finding the system in a particular quantum graph state $\ket{G}$ is given by its Boltzmann weight 
\begin{align}
    P(\ket{G})&=\frac{e^{-\beta E(G)}}{Z_N},
\end{align}
where $Z_N$ is the partition function for the system with $N$ vertices. For the labeled quantum graph system, the partition function $Z^l_N$ is 
\begin{align}
    Z^l_N&=\sum_{G_l}e^{-\beta E(G_l)},
\end{align}
where $G_l$ runs over all labeled graphs with $N$ vertices. The unlabeled quantum graphs partition function $Z^u_N$ is 
\begin{align}
    Z^u_N&=\sum_{G_u}e^{-\beta E(G_u)}\CR
    &=\sum_{G_l}\frac{|\Gamma(G_l)|}{N!}e^{-\beta E(G_l)},
\end{align}
where $G_u$ runs over all unlabeled graphs with $N$ vertices and $|\Gamma(G_l)|$ is the size of the automorphism group of $G_l$ defined in \ref{eq:automorphismGroup}. In the second line, we are summing over all labeled graphs with an additional factor of $|\Gamma(G_l)|/N!$ which accounts for the overcounting of the same unlabeled graph when using labeled graphs.

Numerical simulations are done using the Metropolis Monte Carlo algorithm where given the current graph $G$, a single edge flip is attempted and the new graph with the flipped edge ($G'$) is accepted or rejected with an acceptance probability given by 
\begin{align}
\label{eq:sampling}
    A(G\rightarrow G')&=\mathrm{Min}\left\{1,e^{-\beta(E(G')-E(G))} \right\}\;\;\text{ for labeled graphs,}\nonumber\\
    A(G\rightarrow G')&=\mathrm{Min}\left\{1,\frac{|\Gamma(G')|}{|\Gamma(G)|}e^{-\beta(E(G')-E(G))} \right\},\;\;\text{ for unlabeled graphs}
\end{align}
The acceptance probability for unlabeled graphs comes from applying detailed balance for the system at equilibrium at inverse temperature $\beta$. Let $G_u$ and $G_u'$ be two unlabeled graphs that differ by a single edge flip,
\begin{align}
\label{eq:detailedBalance}
    p(G_u)P(G_u\rightarrow G'_u)&=p(G'_u)P(G'_u\rightarrow G_u)\CR
    \frac{P(G_u\rightarrow G'_u)}{P(G'_u\rightarrow G_u)}&=\frac{p(G'_u)}{p(G_u)}=\frac{|\Gamma(G')|}{|\Gamma(G)|}e^{-\beta (E(G')-E(G))},
\end{align}
where $G',G$ are any two labeled graphs in the isomorphism classes of $G'_u$ and $G_u$ and the probability to find the system in the unlabeled graph state $G'_u$ can be written as $p(G'_u) = \frac{1}{Z^l_N}\frac{|\Gamma(G')|}{N!}e^{-\beta E(G')}$. Then, writing $P(G_u\rightarrow G'_u)=g(G_u\rightarrow G'_u)A(G_u\rightarrow G'_u)$ as the product of selection probability $g(G_u\rightarrow G'_u)$ and acceptance probability $A(G_u\rightarrow G'_u)$ and setting $g(G_u\rightarrow G'_u)$ to 
\begin{align}
    g(G_u\rightarrow G'_u)&=\begin{cases}
        0&\text{ if }|G'_u|\neq |G_u|\pm1\\
        \mathrm{const}&\text{ otherwise } 
    \end{cases}
\end{align}
combined with the Metropolis choice, gives the acceptance ratio in Eq.(\ref{eq:sampling}).  

The acceptance ratio in Eqs.(\ref{eq:sampling})–(\ref{eq:detailedBalance}), which accounts for the relative sizes of automorphism groups, is the standard Metropolis choice for sampling \emph{modulo symmetries} and coincides with the “orbital MCMC” framework \cite{DyerJerrumMueller2014}. Earlier work by Jerrum formalized uniform sampling over $G$‑orbits via Markov chains \cite{Jerrum1993LFCS}. In our setting, the group is $S_N$ acting on labeled graphs, and edge‑flip proposals move between neighboring orbits (unlabeled graphs) while preserving detailed balance with respect to the unlabeled Boltzmann weight. A validation of our implementation of sampling method is shown in Fig.\ref{fig:UnlabeledSamplingStrategyValidation} of Appendix A, 
 where for $N = 7,8,9,10$ we plot both the exact distributions and the Monte Carlo simulation results using acceptance probabilities in Eq.\ref{eq:sampling}. The exact computations are performed by evaluating the partition functions $Z^l_N, Z^u_N$ exactly by summing over all labeled (unlabeled) graphs. 

\subsection{The Free Theory}
For quantum graphs, we will consider a rescaled free Hamiltonian 
\begin{align}
    H_0=J\left(E_0n_0+E_1n_1\right) = J\left(E_0n_0+E_1(\binom{N}{2}-n_0)\right)
\end{align}
where $n_0, n_1$ are the number of edges in $G_0, G_1$ respectively, and $J$ is an $N$ dependent factor needed to make the internal energy of the system extensive in the number of vertices. Since the sum $n_0+n_1=\binom{N}{2}$ is the number of edges in the complete graph $K_N$, setting $J=2/(N-1)$ will make the internal energy extensive in $N$. Assuming that the one-particle energy $E_0$ is strictly less than $E_1$, the ground state is given by $\ket{K_N,\{\}}$, where the $\ket{0}$ one-particle state has all edges and the $\ket{1}$ one-particle state has none. The ground state is unique. At finite temperature, the labeled partition function $Z^l_N$ can be computed analytically. 
\begin{align}
    Z^l_N&=e^{-\beta J E_1 N(N-1)/2}\sum_{n_0=0}^{N(N-1)/2}\binom{N(N-1)/2}{n_0}e^{-\beta J(E_0-E_1)n_0}\CR
    &=e^{-\beta J E_1 N(N-1)/2}\left(1+e^{\beta J (E_1-E_0)}\right)^{N(N-1)/2}
\end{align}
This is simply showing that the partition function $Z^N_l$ is the partition function for a single edge state raised to the power of the total number of possible edges, $\binom{N}{2}$.
The free energy per vertex $f$, internal energy per vertex $u$, and specific heat per vertex $c$ are analytic in $\beta$ 

\begin{align}
    f&=\frac{F}{N}=J\left(\frac{N-1}{2}\right)E_1-\frac{1}{\beta}\left(\frac{2}{N-1}\right)\ln\left(1+e^{\beta J \Delta E}\right)\\
    u&=\frac{U}{N}=\frac{\partial}{\partial\beta}\left(\beta f\right)=J\left(\frac{N-1}{2}\right)\left(E_1-\Delta E\frac{1}{1+e^{-\beta J\Delta E}}\right)\\
    c&=\frac{C}{N}=-\beta^2\frac{\partial^2}{\partial\beta^2}\left(\beta f\right)=\left(\frac{N-1}{2}\right)\left(\frac{\beta J\Delta E}{2}\right)^2\mathrm{sech}^2\left(\frac{\beta J\Delta E}{2}\right),
\end{align}
where $\Delta E = E_1-E_0$. For $J=2/(N-1)$, in the thermodynamic limit $N\rightarrow\infty$, $u\rightarrow \Delta E/2$, $c\rightarrow 0$.

Let $g$ be a labeled graph with edge set $E_g$, the probability of finding this graph as a subgraph of $G_0$ is given by the expectation value of the indicator operator $\langle\m{I}^0_{g}\rangle$. 
\begin{align}
    \langle \m{I}^0_g\rangle&=\frac{1}{Z^N_l}\mathrm{Tr}(\rho \m{I}^0_g)=\sum_{G_l}\bra{G_l,G_l^c}\m{I}^0_{g}\ket{G_l,G_l^c}\frac{e^{-\beta E(G_l)}}{Z^N_l}\CR
    &=\sum_{g'\in \binom{[N]}{2}\backslash E_g}\bra{E_g\cup g',(E_g\cup g')^c}\m{I}^0_{g}\ket{E_g\cup g',(E_g\cup g')^c}\frac{e^{-\beta E(E_g\cup g',(E_g\cup g')^c)}}{Z^N_l}\CR
    &=\frac{1}{Z^N_l}\sum_{m=|E_g|}^{\binom{N}{2}}\binom{\binom{N}{2}-|E_g|}{m}e^{-\beta J\left[(|E_g|+m)E_0+(\binom{N}{2}-|E_g|-m)E_1\right]}\CR
    &=\left(\frac{1}{e^{-\beta \Delta E}+1}\right)^{|E_g|}
\end{align}
In the third line we have done the sum only over graphs where the edge set $E_g$ is in the 0 state, in which case the edge set of such a graph can be written as the disjoint union $E_g\cup g'$. The result has the simple interpretation that the probability of finding the graph $g$ in the 0 state is simply the probability that an edge is in the 0 state computed from the one-particle partition function, raised to the power of the number of edges in the graph. So, all graphs with equal number of edges have the same probability. This model is exactly the Erd\H{o}s--R\'enyi--Gilbert random graph model $G(N,p)$ \cite{Erdos:1959:pmd, 10.1214/aoms/1177706098, Fienberg01102012}, with the probability $p$ to turn on an edge given by 
\begin{align}
\label{eq:pOfBeta}
    p&=\frac{1}{1+e^{-\beta J \Delta E}},\quad \Delta E = E_1-E_0.
\end{align}

The $G(N,p)$ model has been extensively studied in the literature \cite{bollobas2013phase, bollobas2001random}. Generally, the thermodynamic limit $N\rightarrow \infty$ is taken with $p$ a function of $N$. The model exhibits threshold phenomena and phase transitions. There are {\em structural} phase transitions at $p=1/N$ and $p=\ln N/N$ that correspond to the onset of a giant connected component and the graph being fully connected, respectively. The ``order parameter" for these transitions can be thought of as the fraction of vertices in the largest connected component. In our case, since $p\in [1/2,1)$ and $p^c=1-p \in (0,1/2]$, the phase transitions signaling the onset of a giant connected component and the graph being connected will be seen in the graph $G_1$, in the $\ket{1}$ one-particle state. Let $S_1$ be the largest connected graph component of $G_1$, we define $s_1 = |S_1|/N$ to be the fraction of vertices in the largest connected component of $G_1$. $s_1$ is the ``order parameter" of the phase transition. However, note that this structural phase transition is not a proper thermodynamic phase transition. This system has no thermodynamic phase transitions since the free energy is analytic in $\beta$ and has simple thermodynamic limits at any finite $\beta$. There is also no discontinuity in the susceptibility $\chi_{s_1} \equiv \beta N \left(\langle s_1^2\rangle-\langle s_1\rangle^2\right)$ of the order parameter $s_1$ as seen in Fig.\ref{fig:FreeSimsN10To24}.

The situation is more interesting for the unlabeled quantum graph system. The partition function of the unlabeled quantum graphs for the free Hamiltonian reduces to 
\begin{align}
    Z^u_N&=e^{-\beta J E_1 N(N-1)/2}\sum_{m=0}^{N(N-1)/2}D(N,m) e^{\beta J m (E_1-E_0)}
\end{align}
where $D(N,m)$ is the number of unlabeled graphs with $N$ vertices and $m$ edges, and can be computed from the set of labeled graphs $\{g_l^m\}$ with $N$ vertices and $m$ edges as 
\begin{align}
    D(N,m)=\frac{1}{N!}\sum_{g^m_l}|\Gamma(g_l^m)|.
\end{align}
$D(N,m)$ is more efficiently computed from P\'olya's enumeration theorem,
\begin{align}
    \sum_{m=0}^{N(N-1)/2} D(N,m)x^m=Z_{S^{(2)}_N}(1+x),
\end{align}
where $Z_{S^{(2)}_N}$ is the cycle index of the action of the pair group $S^{(2)}_N$ on the edges of the complete graph $K_N$ \cite{harary1969graph, harary2014graphical}. Therefore, the unlabeled partition function is 
\begin{align}
    Z^u_N&=e^{-\beta J E_1 N(N-1)/2}Z_{S_N^{(2)}}(1+e^{\beta J\Delta E}).
\end{align}
\noindent This follows directly from the P\'olya–Redfield enumeration for the action of the pair group $S_N^{(2)}$ on the $\binom{N}{2}$ edge positions, i.e., from the cycle index $Z_{S_N^{(2)}}$ specialized at $1+e^{\beta J\Delta E}$.

We performed Monte Carlo (MC) simulations of the free labeled and unlabeled graphs at several different temperatures for $N\in\{10, 12, 15, 18, 21, 24\}$. At each temperature, the simulations involved an equilibration run followed by a run to determine the autocorrelation time $\tau$. Subsequently, $1000$ independent measurements were made (i.e., one measurement every $\tau$ MC sweep, where a sweep is $N(N-1)/2$ MC steps). Figure \ref{fig:FreeSimsN10To24} shows graphs of the internal energy per vertex $u$, specific heat per vertex $c$, 
fraction of vertices in the largest connected component of $G_1$, $s_1$ and the susceptibility of $s_1$, $\chi_{S_1}$. Both specific heat and susceptibility begin to diverge as $N$ increases, indicating a proper thermodynamic phase transition. There is also a critical slowing of the correlation pseudo-time, measured as the number of MC sweeps, as shown in \ref{fig:plotCorrTHFree}. Some studies of edge correlations in unlabeled random graphs also support the qualitative difference between labeled and unlabeled random graphs \cite{wu2023testing}. 
Recent rigorous and numerical work on unlabeled random networks has shown a \emph{first‑order} phase transition in the canonical ensemble with a prescribed average number of links, along with \emph{ensemble inequivalence} between canonical and microcanonical descriptions below a critical point \cite{EvninKrioukov2025PRL, EvninKrioukov2024arXiv}. In particular, the canonical ensemble exhibits a mixture of phases and lacks the standard percolation transition familiar from labeled Erd\H{o}s--R\'enyi--Gilbert graphs. The qualitative differences we observe here between labeled and unlabeled thermodynamics are consistent with that picture.
\begin{figure}[h!]
\centering
\begin{tabular}{cc}
    \includegraphics[width=0.48\textwidth]{./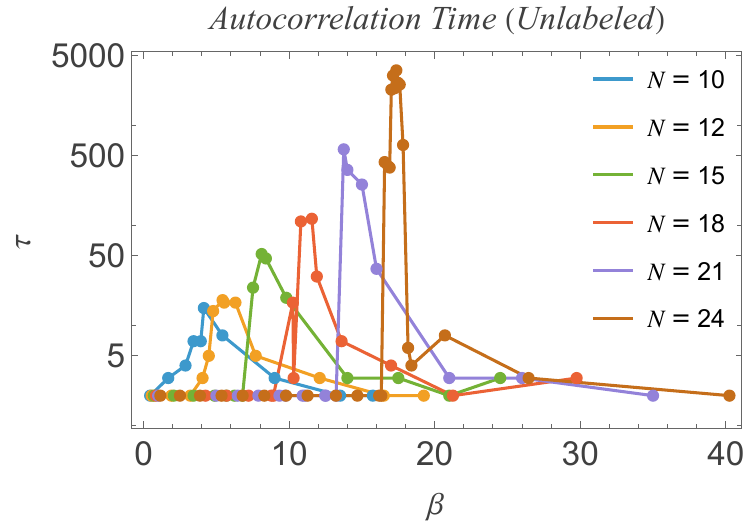}&\includegraphics[width=0.48\textwidth]{./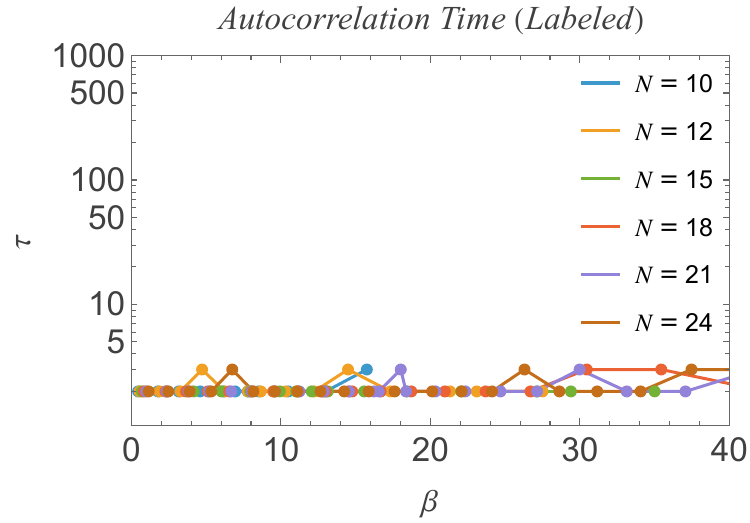}    
\end{tabular}
\caption{Graphs of the autocorrelation time $\tau$ as a function of inverse temperature $\beta$ for the free unlabeled (left) and labeled (right) graph systems.}
\label{fig:plotCorrTHFree}
\end{figure}
\begin{figure}[h!]
\centering
\begin{tabular}{cc}
    \includegraphics[width=0.45\textwidth]{./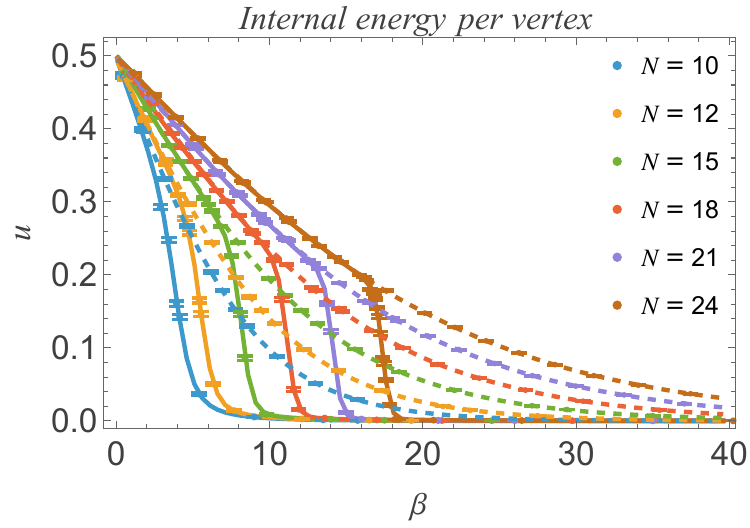}&\includegraphics[width=0.45\textwidth]{./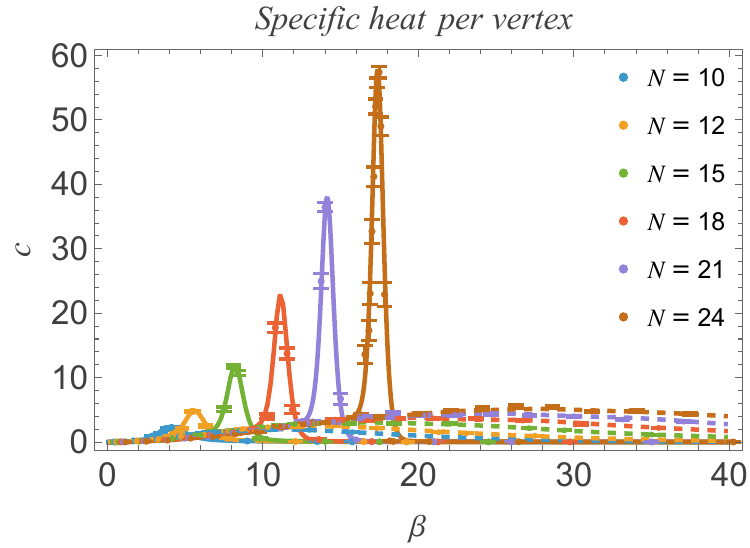}\\   
    \includegraphics[width=0.45\textwidth]{./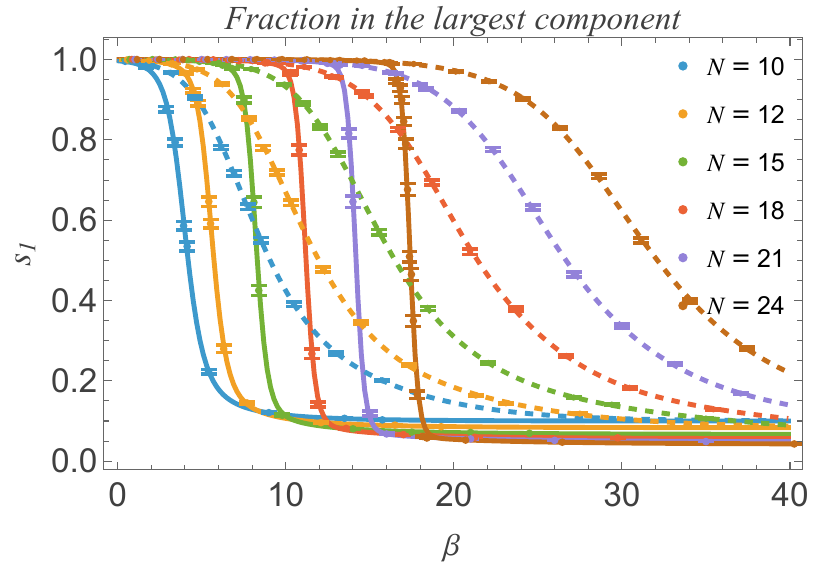}&\includegraphics[width=0.46\textwidth]{./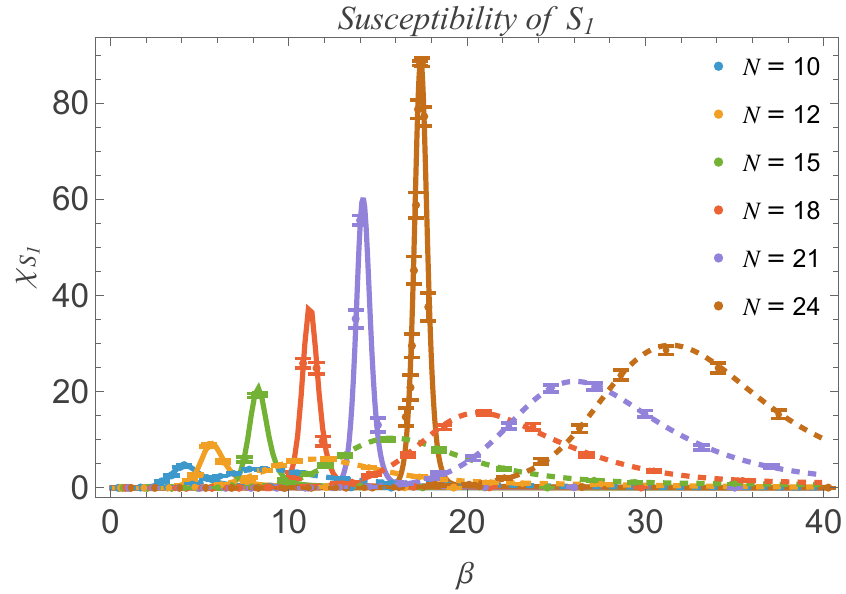}   
\end{tabular}
\caption{Graphs of $u$, $c$, $s_1$, and $\chi_{S_1}$ for the free Hamiltonian with $E_0=0,E_1=1, J=2/(N-1)$, and $N\in\{10,12,15,18,21,24\}$. Dots with error bars are Monte Carlo simulations and the interpolating lines (solid for unlabeled and dashed for labeled graphs) are extrapolations using the multiple histograms method \cite{10.1093/oso/9780198517962.001.0001}.}
\label{fig:FreeSimsN10To24}
\end{figure}

At first, such a qualitative difference between labeled and unlabeled graphs may be puzzling because in the thermodynamic limit where $N\rightarrow\infty$, the size of the automorphism group of a random graph goes to one with a probability approaching 1 \cite{erdos1963asymmetric, bollobas2001random}. Therefore, for most graphs $D(N,m)\sim \frac{1}{N!}\sum_{g_l^m}1=\frac{1}{N!}\binom{N(N-1)/2}{m}$.
However, this statement does not necessarily hold on a subclass of graphs. For instance, for $c>1$ a constant, $c N \log N \le m \le N(N-1)/2-c N \log N$, the automorphism group is likely to be non-trivial since either the graph or its complement will have isolated vertices with high probability, introducing non-trivial automorphisms \cite[Chap.~9]{bollobas1985randomgraphs}. In our case, the labeled and unlabeled distributions agree at both very small and very large $\beta$. 
However, as $\beta$ increases from small values, we start to notice significant differences in the thermodynamic properties of the labeled and unlabeled graphs. This sharp transition is related to the graph $G_1$ becoming fully connected and, hence, developing a trivial automorphism group. This is reflected in the graph of $s_1$.

\subsection{The Graph Ising Model}
We consider a rescaled version of the Ising Hamiltonian given in Eq.(\ref{eq:IsingHamiltonian}), which for quantum graphs simplifies to 
\begin{align}
\label{eq:HIsing}
    H_{Ising}&=J\left(E_0b^0+E_1b^1\right),
\end{align}
where $J$ is an $N$ dependent positive parameter needed to make the internal energy extensive in the number of vertices. We will recast the Ising Hamiltonian in two ways to highlight different limits.
\begin{align}
    H_{Ising}&=J\left(E_0+E_1\right)b^0-2J(N-2)E_1n_0+JN\binom{N-1}{2}E_1
    \CR&=J\left[\frac{E_0}{2}\sum_i (d^0_i)^2+\frac{E_1}{2}\sum_i(d^1_i)^2\right]-J\left(E_0n_0+E_1n_1\right)
\end{align}
where $n_0, n_1$ are the number of edges in $G_0,G_1$ and $d^0_i, d^1_i$ are the degree of vertex $i$ in $G_0, G_1$ respectively. In the special case where $E_0=-E_1$ we recover the free system with $E_0=0$ up to a constant shift in energy. If $E_0 < 0$, regardless of the value of $E_1\ge E_0$ the system is ``ferromagnetic" in the sense that there is a unique ground state $\ket{K_N,\{\}}$ (possibly including its mirror $\ket{\{\},K_N}$ if $E_0=E_1$ ). This can be seen more easily in Eq.\ref{eq:HIsing}, where maximizing the number of angles $b^0$ in $G_0$ minimizes the energy.  For $0=E_0<E_1$, the ground state is not unique. There are $\lfloor N/2\rfloor+1$ states with the minimum energy corresponding to all angle-free graphs (i.e., all graphs with a maximum degree of 1) in the $\ket{1}$ state. For $0< E_0< E_1$, since $n_0+n_1=\binom{N}{2}$ is fixed, the Hamiltonian is minimized when both $G_0$ and $G_1$ are as close to a regular graph as possible, thus simultaneously minimizing $\sum_i (d^0_i)^2$ and $\sum_i (d^1_i)^2$. $G_0$ will have more edges overall depending on how large $\Delta E = E_1-E_0$ is. The larger $\Delta E$, the larger $n_0-n_1$. In the degenerate case $0<E_0=E_1$, the ground-state graphs have both edges and degrees distributed as evenly as possible between $G_0$ and the complement $G_1$. The ground state is highly degenerate, and the system exhibits frustration. Fig.\ref{fig:IsingGroundStates} shows the ground states for $N=7, E_1 = 1$, and $E_0\in\{-1, 0,1/2, 1\}$.

\begin{figure}
\centering
    \includegraphics[width=0.95\textwidth]{./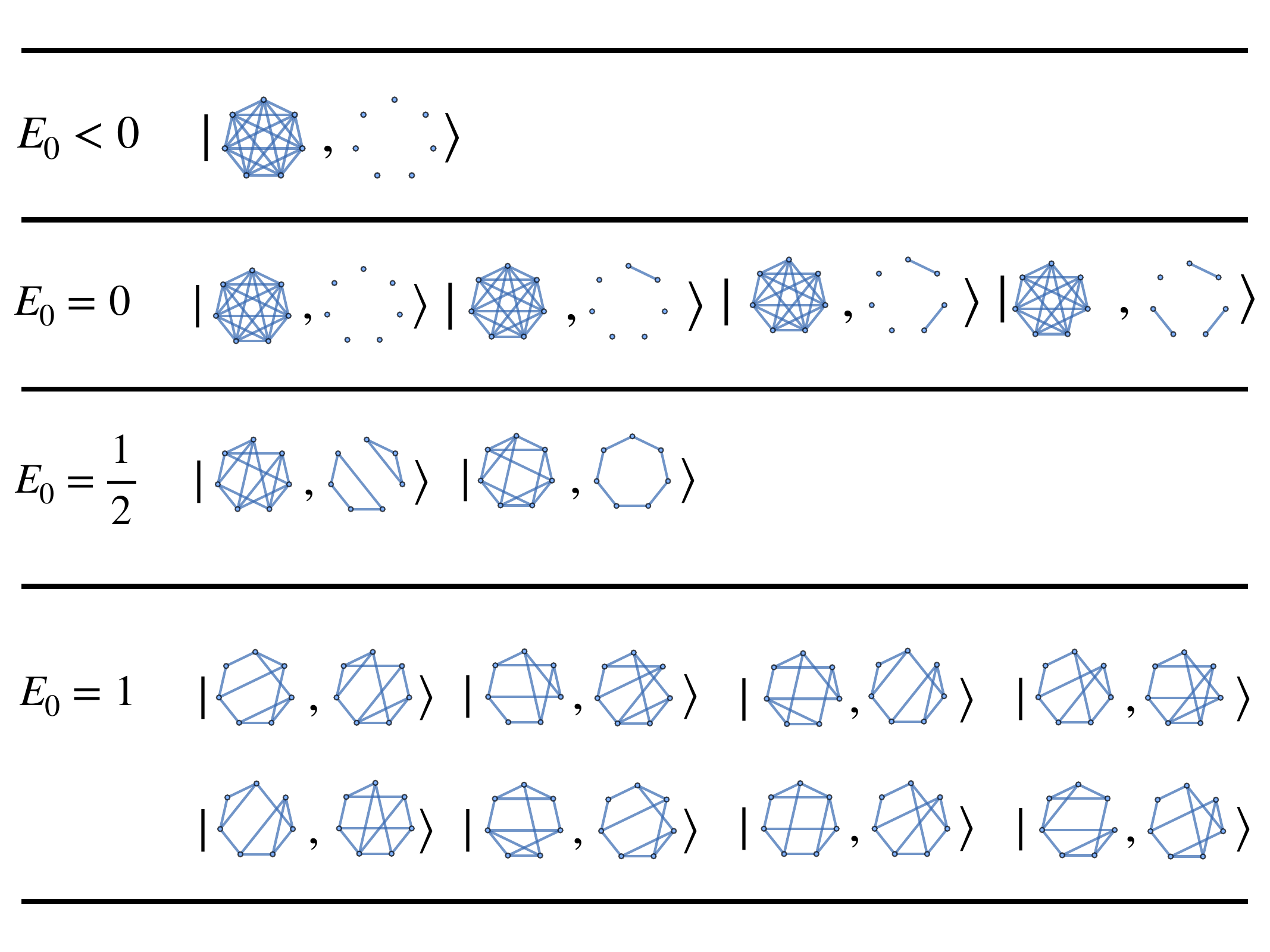} 
\caption{Ground states of the graph Ising model for $N=7, E_1=1$, and $E_0 \in \{-1, 0,1/2, 1\}$. There is a unique ground state for $E_0<0$, but the ground state is degenerate for $E_0\ge 0$.}
\label{fig:IsingGroundStates}
\end{figure}

An equivalent way to view this model is as the traditional Ising model of spins sitting on the vertices of the line graph of the complete graph, as will be shown below. The line graph of a graph has the edges of the original graph as its vertices, and two vertices are connected in the line graph if their corresponding edges in the original graph share a vertex. The complete graph $K_4$ and its line graph $L(K_4)$ are shown in Fig.\ref{fig:KNLineGraph}. Each vertex in $L(K_N)$ has degree $2(N-2)$. So for large $N$, the system is well approximated by the infinite-dimensional Ising model, which has analytic solutions in both the ferromagnetic and antiferromagnetic cases \cite{infinite_range_ising_ubc,huang1987statistical,PhysRevE.65.027102,PhysRevE.63.026116}. Let $L_{ij}$ be the adjacency matrix element of the line graph of the complete graph, and let $\m{I}^k_i$ be the indicator operator as before where the index $i$ now labels the edges. Then, the Ising Hamiltonian can be recast as
\begin{align}
\label{eq:HIsingQuantumGraph}
    H_{Ising}&=\frac{J}{2}\left(E_0\sum_{ij}\m{I}^0_iL_{ij}\m{I}^0_j+E_1\sum_{ij}\m{I}^1_iL_{ij}\m{I}^1_j\right)\CR
    &=\frac{J}{2}(E_0+E_1)\sum_{i,j}\m{I}^0_iL_{ij}\m{I}^0_j-2JE_1(N-2)n^0+JN\binom{N-1}{2}E_1,
\end{align}
where we used $\m{I}^0_i+\m{I}^1_i=1$ on the second line. Ignoring the last term which is constant shift and introducing new ``spin" variables $S_i=2\m{I}^0_i-1$, so that $S_i\in\{-1,+1\}$, we get
\begin{align}
\label{eq:HIsingLineGraphRecast1}
     H_{Ising}&=\frac{J}{8}(E_0+E_1)\sum_{ij}S_iL_{ij}S_j-J\frac{N-2}{2}(E_0+3E_1)\sum_j S_j
\end{align}
up to an overall constant shift. This has a similar form to an infinite range Ising model with a Hamiltonian of the form 
\begin{align}
\label{eq:HIsingLineGraphRecast2}
    H=\sum_{ij}S_i J_{ij}S_j-h\sum_{j}S_j
\end{align}
where for the infinite-range Ising model $J_{ij}=JK_{ij}$, with $K_{ij}$ the adjacency matrix of the complete graph. In our case, $J_{ij}$ is the rescaled adjacency matrix of the line graph of the complete graph, and $h$ is a fixed external magnetic field,
\begin{align}
    J_{ij}&=\frac{J}{8}(E_0+E_1)L_{ij},\quad h=J\frac{N-2}{2}(E_0+3E_1).
\end{align}

As Eqs. (\ref{eq:HIsingLineGraphRecast1})-(\ref{eq:HIsingLineGraphRecast2}) show, the interactions live on $L(K_N)$, i.e., the Ising model on the line graph of a complete graph. For antiferromagnetic couplings on arbitrary line graphs, polynomial‑time approximation schemes for the partition function and rapid mixing of Glauber dynamics are known \cite{DyerHeinrichJerrumMueller2021}. In the ferromagnetic direction, the high degree $2(N-2)$ of $L(K_N)$ makes a mean‑field (Curie–Weiss) analysis accurate in the large‑$N$ limit; (see standard expositions of the Curie–Weiss model for the emergence of the self‑consistency equation and thermodynamic singularities \cite{VelenikCurieWeiss}).

\begin{figure}
\centering
\includegraphics[width=0.8\textwidth]{./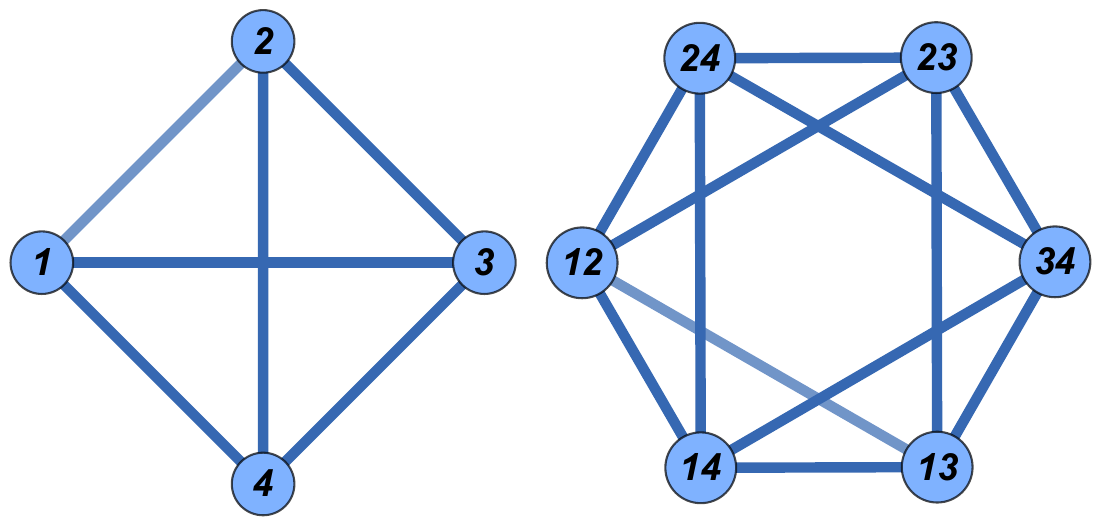} 
\caption{The complete graph $K_4$ (left) and its line graph $L(K_4)$ (right).}
\label{fig:KNLineGraph}
\end{figure}
The infinite-range Ising ferromagnet and antiferromagnet have been studied in \cite{PhysRevE.63.026116, PhysRevE.65.027102}. These systems have Ising spins sitting on the vertices of the complete graph and interact all-to-all. The vertices are treated as indistinguishable. The ferromagnetic case has a second-order phase transition. The antiferromagnetic system does not have a phase transition. 

Numerical simulations show that these qualitative properties of the infinite-range Ising model also hold for the unlabeled ferromagnetic Ising graph model in our case. On the other hand, the labeled Ising graphs, both ferromagnetic and antiferromagnetic, exhibit no phase transitions. We perform Monte Carlo numerical simulations for $N\in\{10,12,15,18,21,24\}$ for labeled and unlabeled quantum graphs with $E_1=1.0$ fixed and $E_0$ varied between $-0.5$ and $0.5$ to study the ferromagnetic and antiferromagnetic cases, respectively. We also set $J=1/\binom{N-1}{2}$ to make the internal energy extensive in $N$. The reason is that, as can be seen from the first line of Eq.\ref{eq:HIsing}, the internal energy $U$ and the specific heat $C$ of this system grow with the number of angle subgraphs of the complete graph. There are $N\binom{N-1}{2}$ angle subgraphs in the complete graph, so setting $J=1/\binom{N-1}{2}$ will make $U$ and $C$ proportional to $N$.  
In addition to the internal energy per vertex $u$, specific heat per vertex $c$, fraction of vertices in the largest connected component $s_1$, and its susceptibility $\chi_{S_1}$, we also plot the average magnetization per edge $m$ of graph $G_1$, where $m=n_1/\binom{N}{2}\in[0,1]$.  

\subsubsection{Ferromagnetic system}

 Similarly to the infinite-range Ising model, the unlabeled graph ferromagnetic system has a sharp transition marked by critical slowing, divergence of the specific heat per vertex $c$, and divergence of susceptibilities $\chi_{S_1}$ and $\chi_m\equiv \beta\binom{N}{2}\left(\langle m^2\rangle-\langle m\rangle^2\right)$ (see Fig.\ref{fig:IsingSimFerroPlots}). This is strongly suggestive of an actual second-order phase transition in the thermodynamic limit of $N\rightarrow \infty$, though, it is not clear whether the critical temperature $\beta_c$ goes to infinity or converges to a finite value in that limit. However, this is not the case for the labeled graphs. The ferromagnetic labeled graphs do not appear to have any thermodynamic phase transitions. There is no critical slowing, and the specific heat and susceptibility behave smoothly with increasing $N$.

\begin{figure}[h]
\centering
\begin{tabular}{cc}
    \includegraphics[width=0.45\textwidth]{./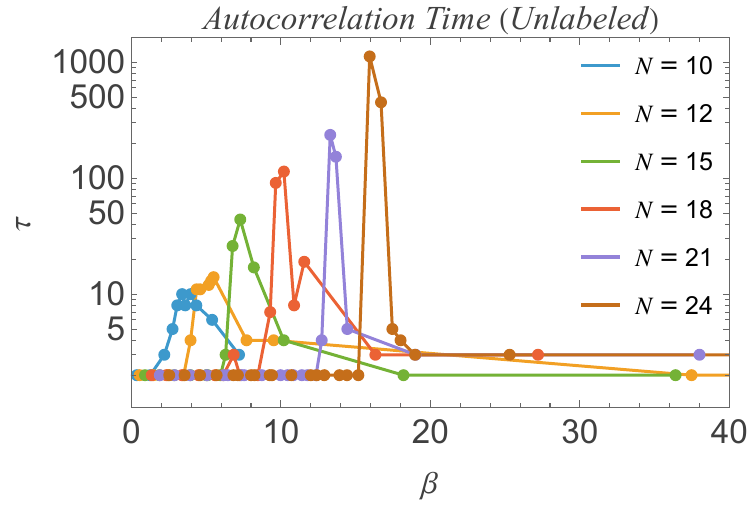}&\includegraphics[width=0.45\textwidth]{./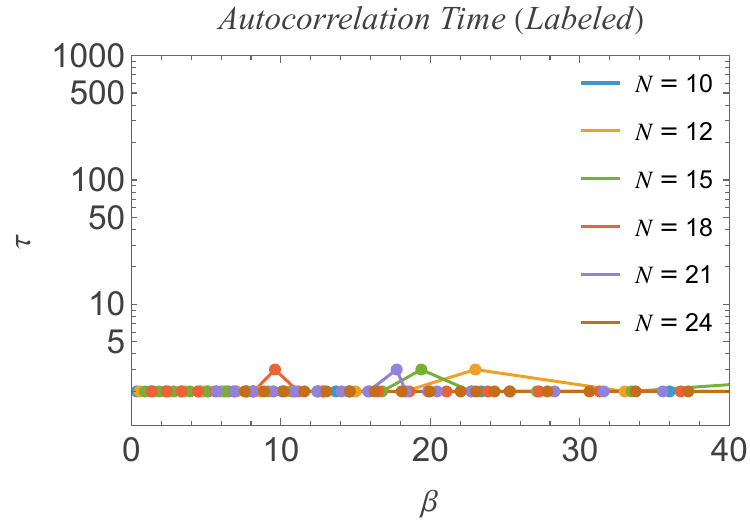}
\end{tabular}
\caption{Plots of autocorrelation times for the unlabeled (left) and labeled (right) ferromagnetic Ising graph systems.}
\label{fig:IsingFerroCorrT}
\end{figure}

\begin{figure}[h]
\centering
\begin{tabular}{cc}
    \includegraphics[width=0.45\textwidth]{./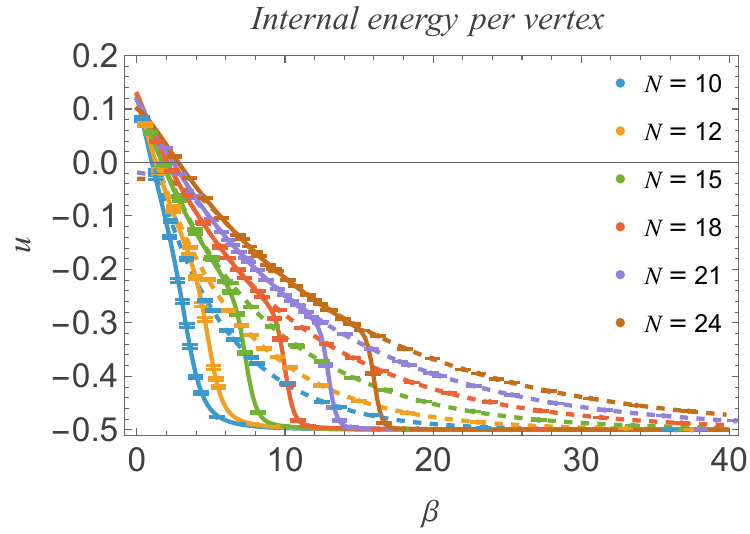}&\includegraphics[width=0.45\textwidth]{./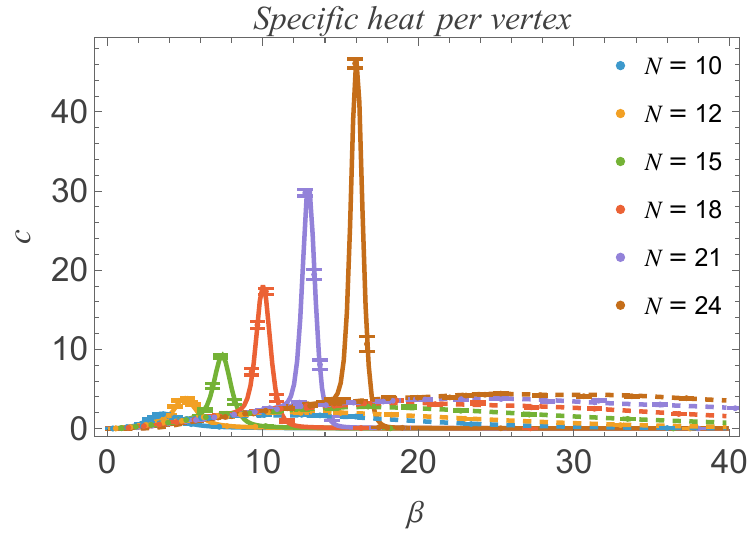}\\ 
    \includegraphics[width=0.45\textwidth]{./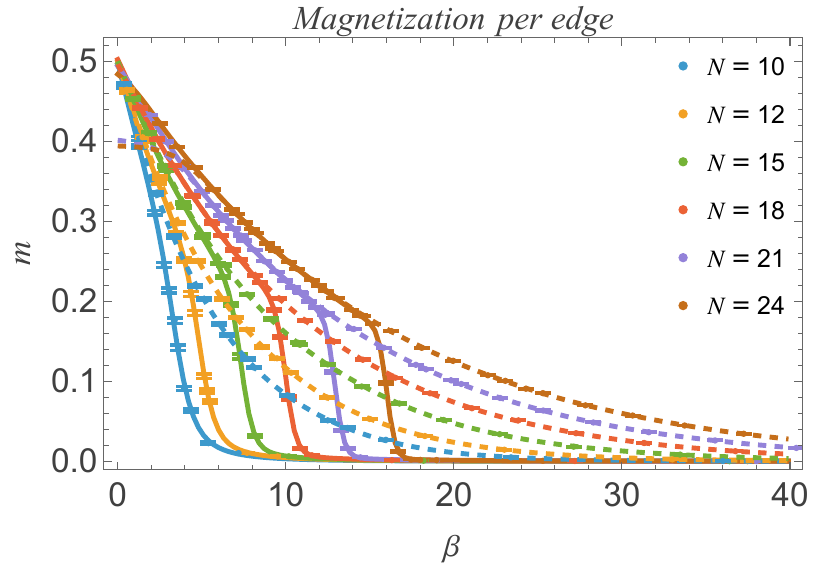}&\includegraphics[width=0.45\textwidth]{./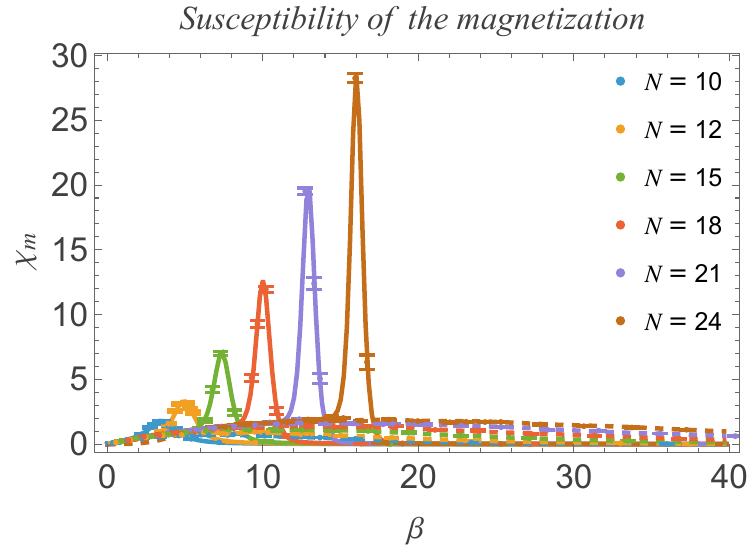}\\
    \includegraphics[width=0.45\textwidth]{./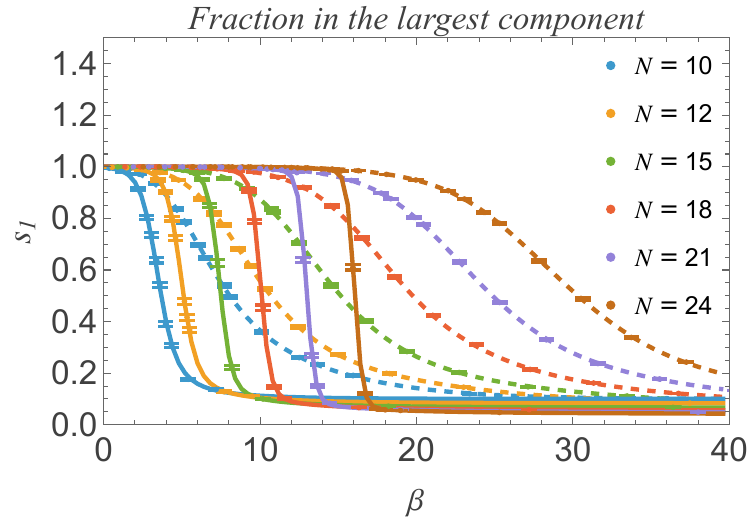}&\includegraphics[width=0.45\textwidth]{./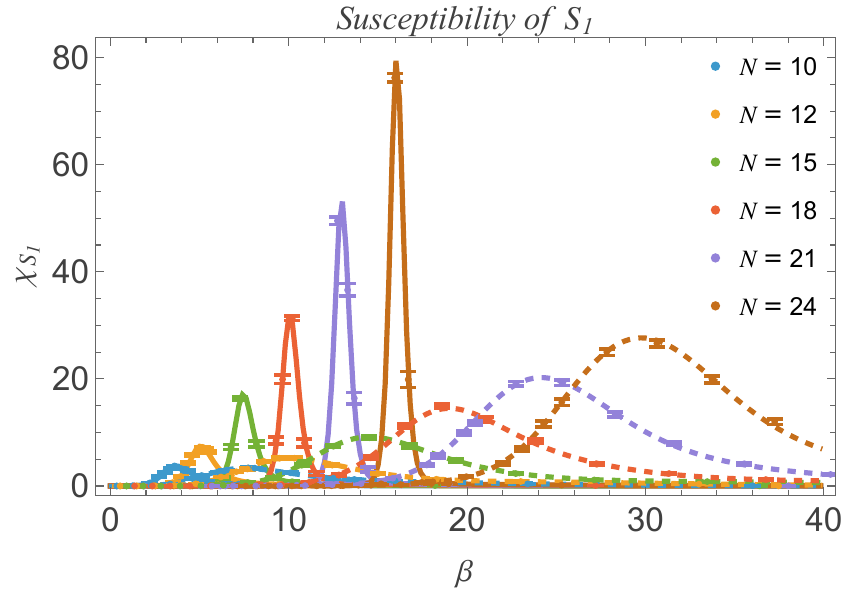}
\end{tabular}
\caption{(Left column) Graphs of the internal energy per vertex $(u)$, magnetization of $G_1$, and fraction of vertices in the largest component of $G_1$, $(s_1)$; (right column) specific heat per vertex ($c$), susceptibility of the magnetization ($\chi_m$), and susceptibility of the fraction of vertices in the largest component of $G_1$, ($\chi_{s_1}$) for the ferromagnetic labeled and unlabeled quantum graphs with $E_0=-0.5,E_1=1$, and $N\in\{10,12,15,18,21,24\}$. The dots with error bars are Monte Carlo simulations, the interpolating lines (solid for unlabeled and dashed for labeled graphs) are made using the multiple histograms method \cite{10.1093/oso/9780198517962.001.0001}.}
\label{fig:IsingSimFerroPlots}
\end{figure}

The close resemblance of the graphs of the internal energy and mean magnetization is not surprising since they are closely related by Eq.(\ref{eq:HIsingDegreeVsAngleCount}). 
\vspace{16em}

\subsubsection{Antiferromagnetic system}
Fig.\ref{fig:IsingAntiFerroPlots} shows the MC simulation results for unlabeled and labeled antiferromagnetic graph systems at $E_0=+0.5, E_1=1.0$. There is no evidence of a phase transition in both the unlabeled and labeled antiferromagnetic Ising graph systems. As $N$ increases, there is little difference in the graphs of the various thermodynamic variables between the labeled and unlabeled graphs. That is because the antiferromagnetic graph system has connected $G_0$ and $G_1$ in the ground states, and therefore connected graphs at all temperatures. A random connected labeled graph has a trivial automorphism group with a probability approaching 1 as $N\rightarrow \infty$. Therefore, the qualitative difference between the labeled and unlabeled graph systems seen in the ferromagnetic case does not arise here.

\begin{figure}[h!]
\centering
\begin{tabular}{cc}
    \includegraphics[width=0.45\textwidth]{./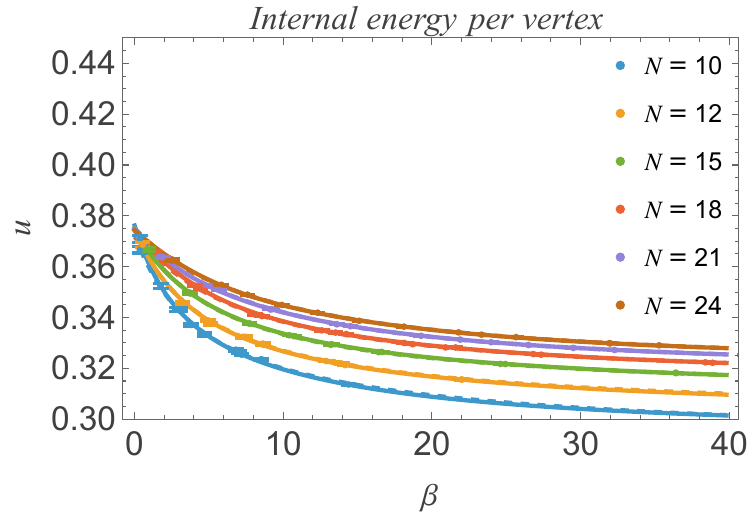}&\includegraphics[width=0.45\textwidth]{./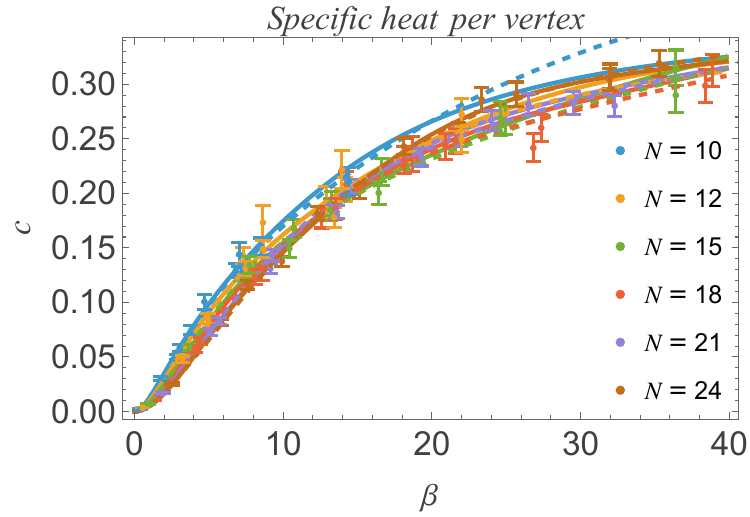}\\
    \includegraphics[width=0.45\textwidth]{./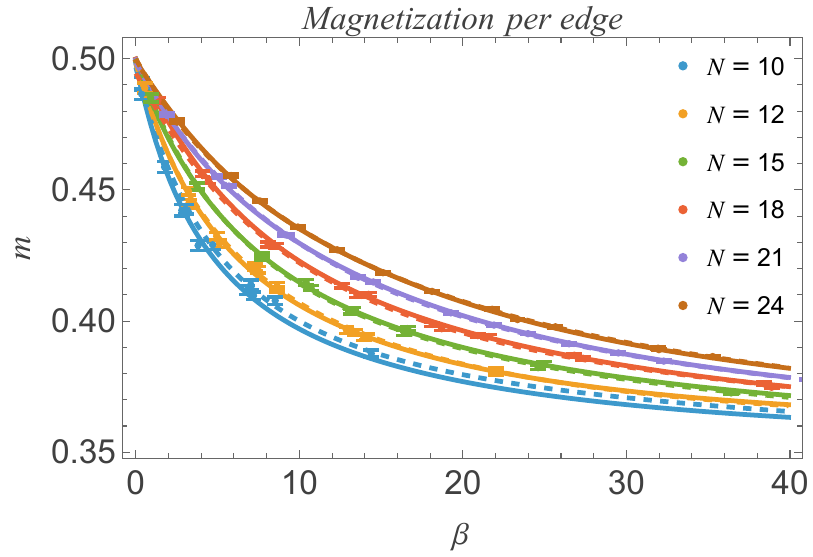}&\includegraphics[width=0.45\textwidth]{./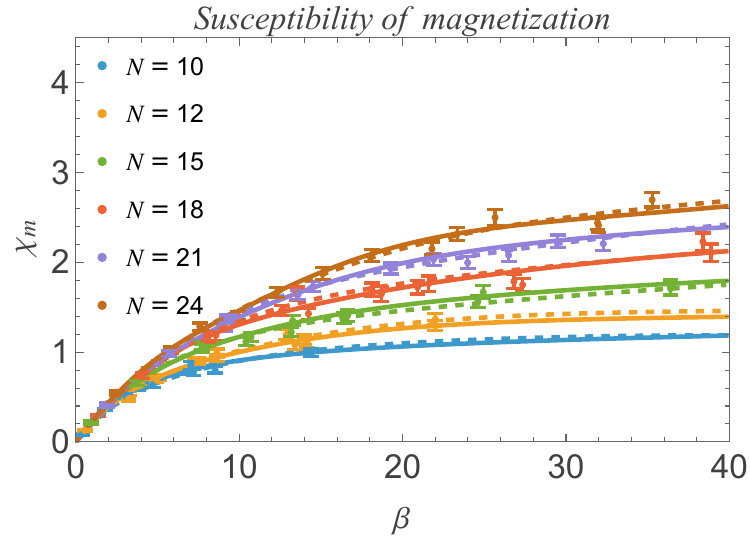}
    \\
    \includegraphics[width=0.45\textwidth]{./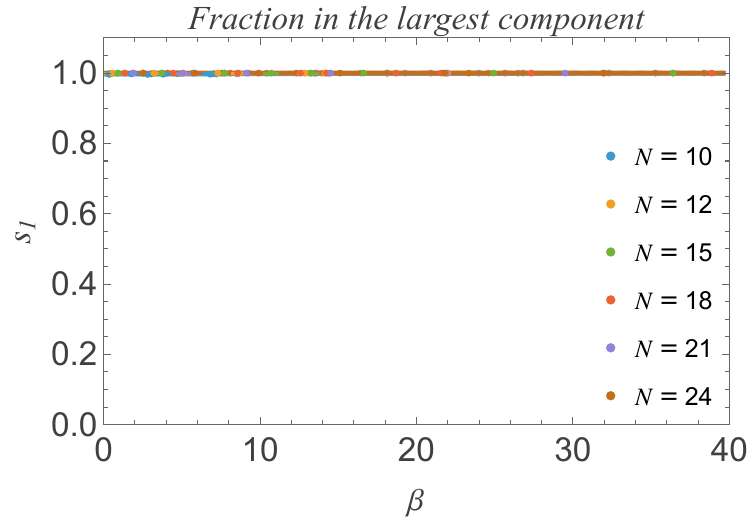}&\includegraphics[width=0.45\textwidth]{./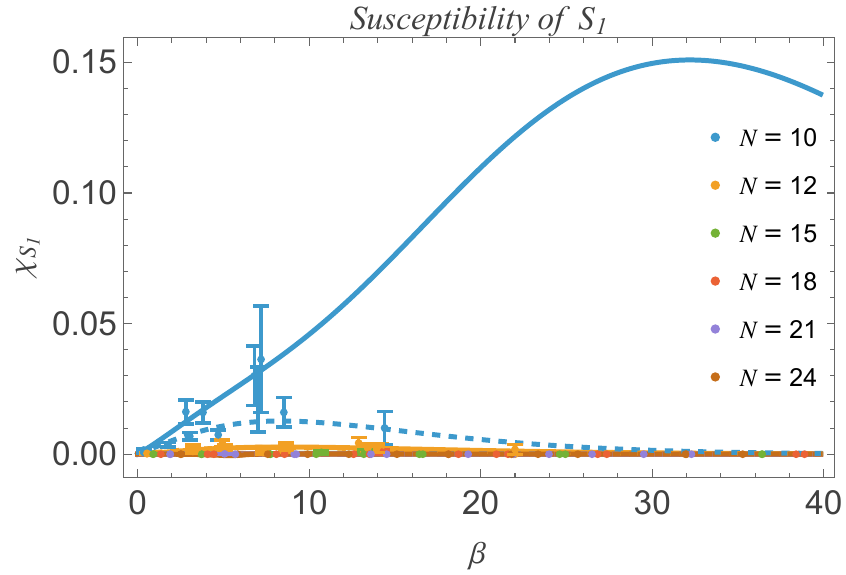}
\end{tabular}
\caption{(Left column) Graphs of the internal energy per vertex $(u)$, mean magnetization of $G_1$ ($m$), and fraction of vertices in the largest connected component ($s_1$); (Right column) Graphs of specific heat per vertex ($c$), susceptibility of $m$ ($\chi_m$), and susceptibility of $s_1$ ($\chi_{s_1}$) for antiferromagnetic Ising graph system with $E_0=+0.5,E_1=1.0$, and $N\in\{10,12,15,18,21,24\}$. The dots with error bars are Monte Carlo simulations, and interpolating lines (solid for unlabeled, dashed for labeled) are made using the multiple histograms extrapolation method. The dashed lines for larger $N$ values are not visible as they overlap with the solid lines due to the two distributions converging.}
\label{fig:IsingAntiFerroPlots}
\end{figure}

\begin{figure}[h!]
\centering
\begin{tabular}{cc}
    \includegraphics[width=0.45\textwidth]{./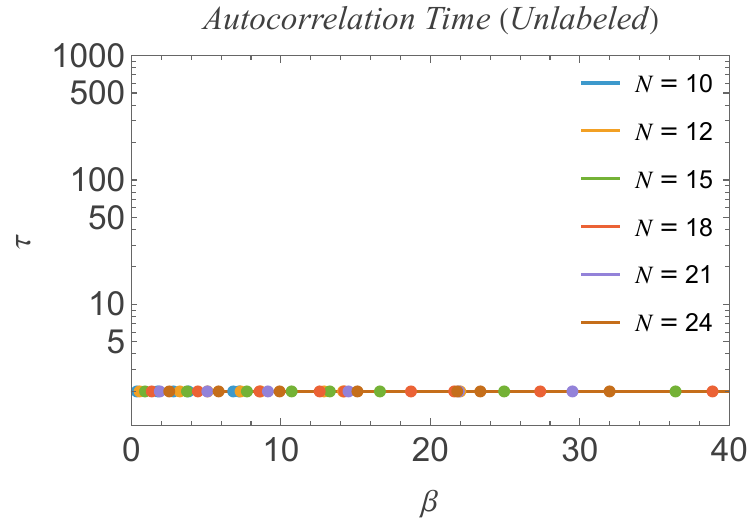}&\includegraphics[width=0.45\textwidth]{./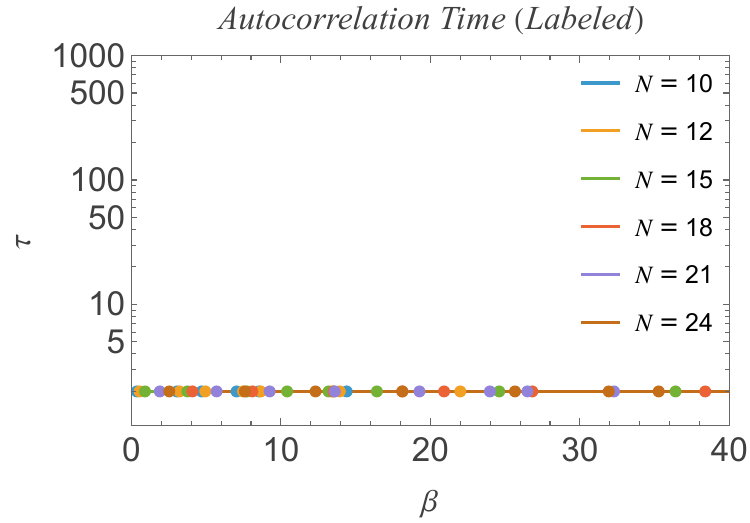}
\end{tabular}
\caption{Graphs of the autocorrelation time $\tau$ in Monte Carlo time for the unlabeled (left column) and labeled (right column) for the Ising antiferromagnetic graph system with $E_0=+0.5,E_1=1.0$, and $N\in\{10,12,15,18,21,24\}$.}
\label{fig:IsingAntiFerro}
\end{figure}

\vspace{5em}

\section{Summary and Future Work}
\label{sec:conclusion}
We extended the notion of quantum labeled graphs to labeled and unlabeled quantum multigraphs. The Hilbert space associated with the labeled finite multigraphs for a fixed number of vertices $N$ and dimension $D$ of the one-particle Hilbert space has basis kets given by all weak ordered set partitions of the edge space of the complete graph into $D$ blocks. The Hilbert space of the unlabeled quantum multigraphs is found by projecting the labeled Hilbert space to the 1D irreducible subspaces under the action of the symmetric group. The projection operators are the symmetrization and antisymmetrization operators. 

We introduced simple dynamics on the Hilbert space of quantum graphs; the free Hamiltonian which is inherited from the Hamiltonian of the one-particle edge Hilbert space with no interaction among different edges, and the graph Ising model where neighboring edges interact with a simple Ising-type Hamiltonian. The thermodynamic properties of the quantum graph systems were investigated numerically using Monte Carlo simulations. We found strong evidence of a proper thermodynamic phase transition in the unlabeled graphs for both the free Hamiltonian and the ferromagnetic Ising Hamiltonian at possibly infinite inverse temperature (at zero temperature), characterized by divergence of autocorrelation time, specific heat, and susceptibilities. This phase transition has to do with $G_1$ transitioning from having a non-trivial automorphism group to a trivial automorphism group. This is further corroborated by the absence of a phase transition in the antiferromagnetic system in which both $G_0$ and $G_1$ remain connected (and hence typically have a trivial automorphism group) at all temperature ranges from the ground state to infinite temperature. The labeled graph system showed no evidence of a phase transition for both the free and the Ising Hamiltonians. 

The similarity of the graphs of the thermodynamic functions in the free and Ising (ferromagnetic) Hamiltonian leads us to speculate that the unlabeled graphs are in the same universality class for any microscopic Hamiltonian as long as it has the unique ground state given by $\ket{K_N,\{\}}$. This initial numerical exploration can be extended in future work to explore this speculation further. First, it will be interesting to determine whether the critical temperature at the thermodynamic limit is finite or zero by performing the mean field theory computation analytically. The mean field theory approximation will also give the critical exponents and address the question of the universality classes. Another direction to pursue is to extend the numerical investigation of the thermodynamics of quantum multigraphs for $D\ge2$. Further down the line, we are interested in using quantum multigraphs to construct models of emergent geometry where the goal there will be to identify a class of Hamiltonians whose ground-state graphs are combinatorial manifolds.  

\section*{Acknowledgement}
This work is supported in part by the National Science Foundation LEAPS-MPS grant (award number 2138323) and the DOE RENEW-HEP grant (award number DE-SC0024518). KB thanks SJSU colleagues Curtis Asplund, Alejandro Garcia, Ehsan Khatami, Chris Smallwood, Ken Wharton; and Pepperdine Colleague Kevin Iga for valuable feedback on earlier drafts of the article.

\bibliography{./bibliography}
\bibliographystyle{plain}

\appendix
\section{Validation of MC Sampling Strategy for Unlabeled Graphs}
Here we show a comparison of the MC and exact distributions of the internal energy per vertex $u$ and the specific heat per vertex $c$, fraction of vertices in the largest component $s_1$, and the susceptibility of $s_1$ for the unlabeled graphs for both the free and Ising Hamiltonians.
\begin{figure}[h!]
\centering
\begin{tabular}{cc}
    \includegraphics[width=0.42\textwidth]{./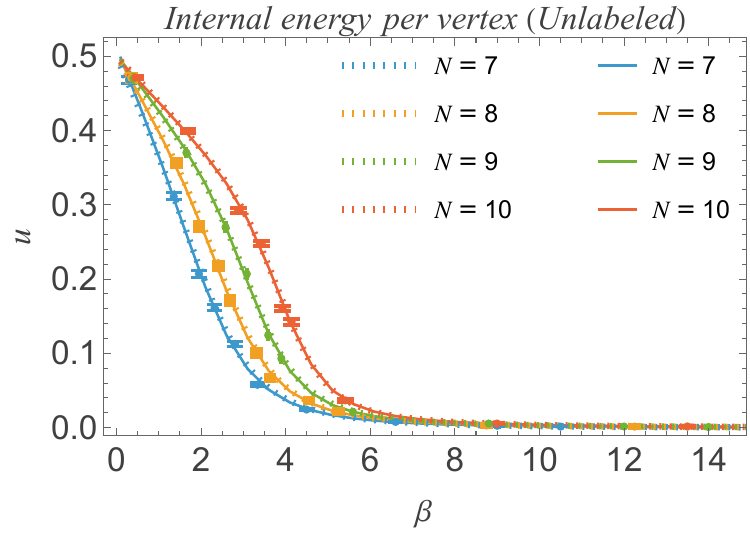}&\includegraphics[width=0.4\textwidth]{./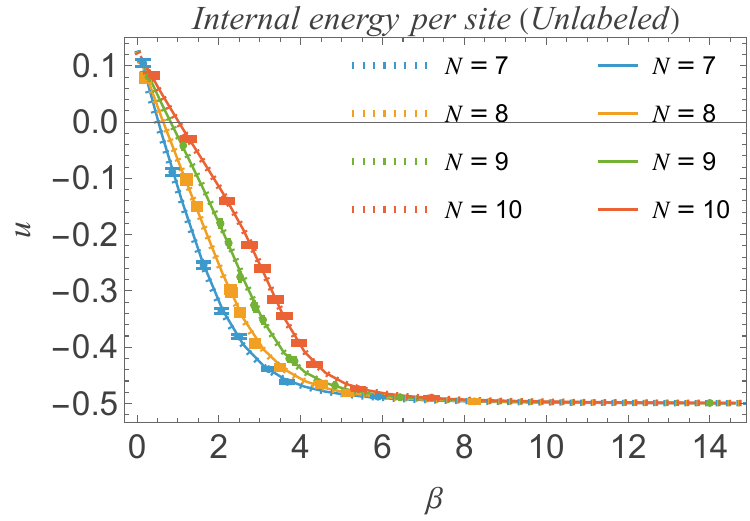}\\
    \includegraphics[width=0.4\textwidth]{./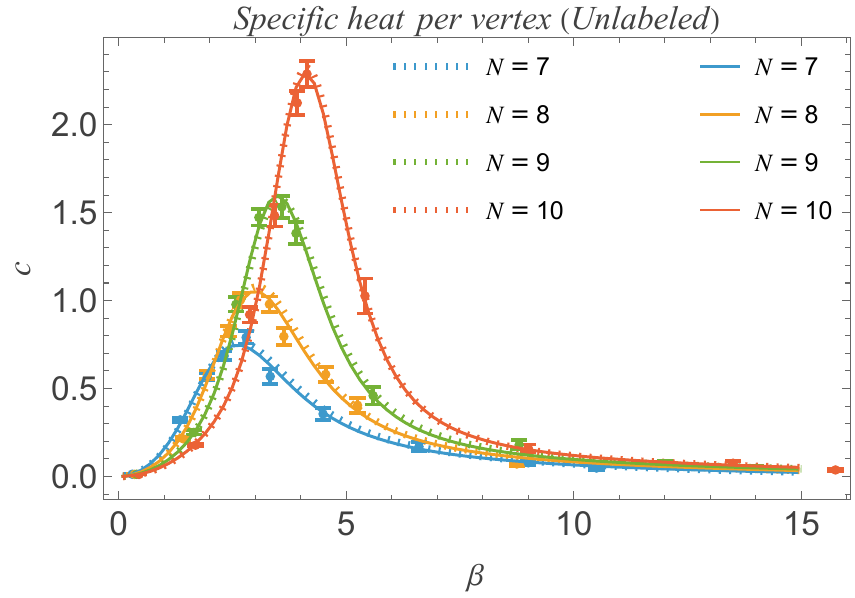}&\includegraphics[width=0.4\textwidth]{./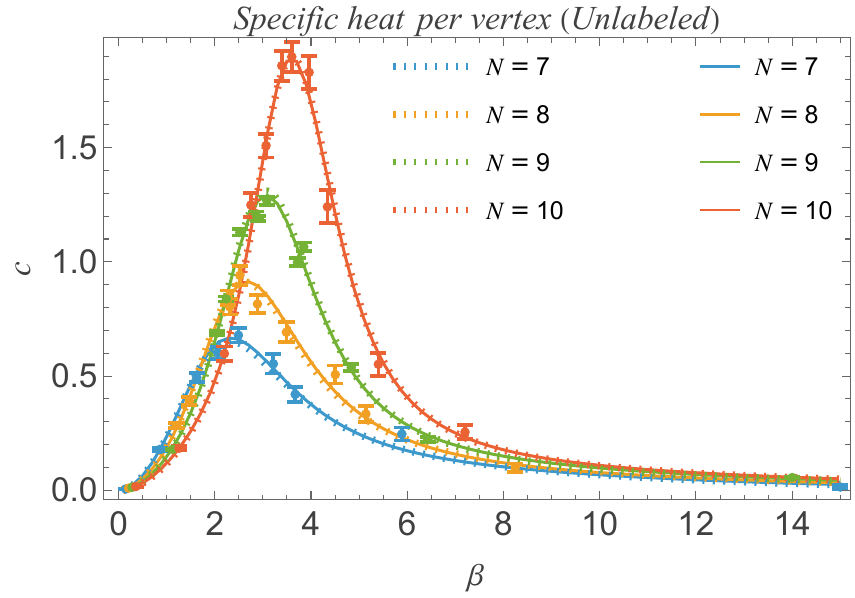}\\
    \includegraphics[width=0.4\textwidth]{./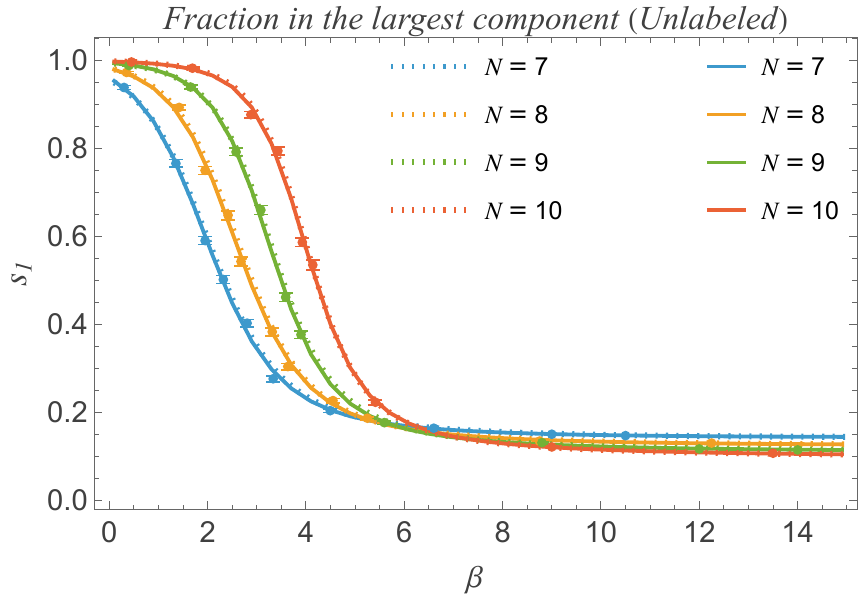}&\includegraphics[width=0.4\textwidth]{./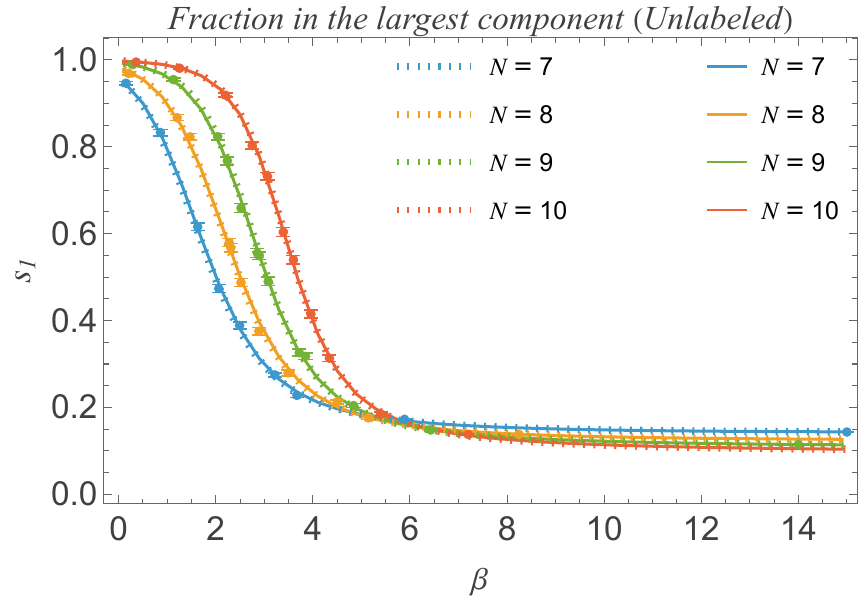}\\ 
    \includegraphics[width=0.4\textwidth]{./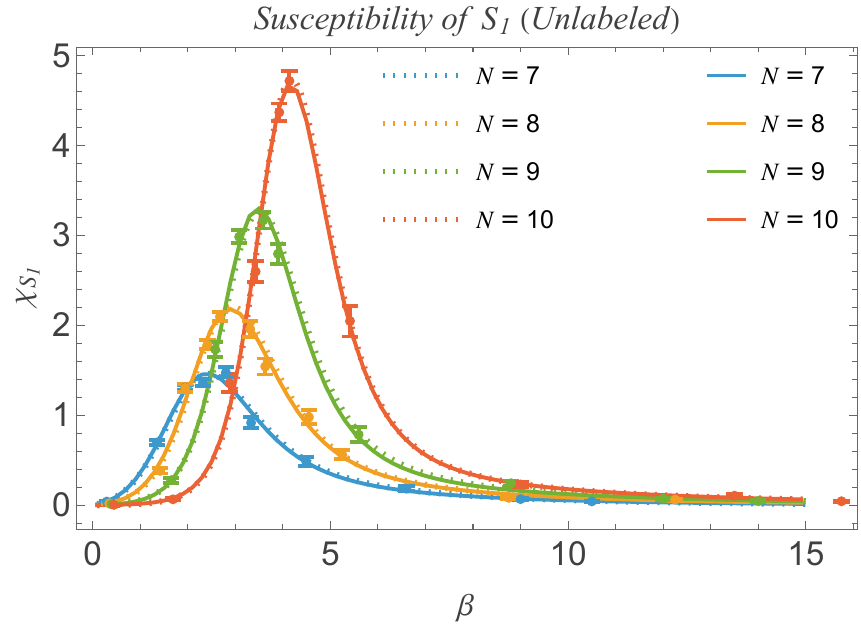}&\includegraphics[width=0.4\textwidth]{./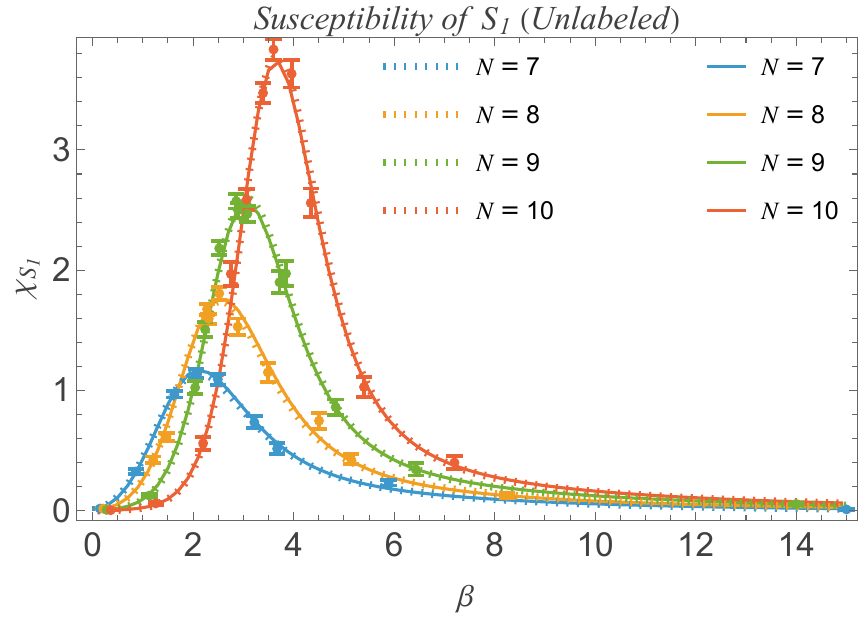} 

\end{tabular}
\caption{Graphs of $u, c, s_1,\chi_{s_1}$ as functions of inverse temperature $\beta$ for $N\in\{7,8,9,10\}$ for the free unlabeled graph system (left column) with $E_0=0,E_1=1$ and the Ising unlabeled graph system (right column) with $E_0=-0.5,E_1=1$. The dots with error bars are Monte Carlo simulations, solid lines are interpolations of these done using the multiple histograms extrapolation method. Dashed lines are exact computations. There is excellent agreement between the MC and exact plots.}
\label{fig:UnlabeledSamplingStrategyValidation}
\end{figure}
\end{document}